\documentclass[12pt, draftclsnofoot, onecolumn]{IEEEtran}
\usepackage{graphicx}
\usepackage{amsthm}
\usepackage{epsfig}
\usepackage{latexsym}
\usepackage{amsfonts}
\usepackage{here}
\usepackage{rawfonts}
\usepackage[latin1]{inputenc}
\usepackage[T1]{fontenc}
\usepackage{calc}
\usepackage{capitalgreekitalic}
\usepackage{url}
\usepackage{enumerate}
\usepackage{color}
\usepackage[tbtags]{amsmath}
\usepackage{amssymb}
\usepackage{upref}
\usepackage{epic,eepic}
\usepackage{times}
\usepackage{dsfont}
\usepackage{comment}
\usepackage{cite}
\usepackage{bbm} 











\newtheorem{corollary}{Corollary}
\newtheorem{theorem}{\bf Theorem}

\usepackage{dsfont}

\newcounter{step}
\newlength{\totlinewidth}
  {\end{list}%
  \rule{\linewidth}{1pt}}
\newcounter{substep}

  {\end{list}}

\newlength{\aligntop}
\setlength{\aligntop}{-0.53em}
\newlength{\alignbot}
\setlength{\alignbot}{-0.85\baselineskip}
\addtolength{\alignbot}{-0.1em} \makeatletter
\renewenvironment{align}{%
  \vspace{\aligntop}
  \start@align\@ne\st@rredfalse\m@ne
}{%
  \math@cr \black@\totwidth@
  \egroup
  \ifingather@
    \restorealignstate@
    \egroup
    \nonumber
    \ifnum0=`{\fi\iffalse}\fi
  \else
    $$%
  \fi
  \ignorespacesafterend%
  \vspace{\alignbot}\par\noindent
} \makeatother

\IEEEoverridecommandlockouts

\usepackage{algorithm}
\usepackage{algorithmic}

\usepackage{subfigure}


\begin{document}
\clearpage
\title{\huge Data Correlation-Aware Resource Management in Wireless Virtual Reality (VR): An Echo State Transfer Learning Approach \vspace*{-0em}}
\author{{Mingzhe Chen\IEEEauthorrefmark{1}}, Walid Saad\IEEEauthorrefmark{2}, Changchuan Yin\IEEEauthorrefmark{1}, and M\'erouane Debbah\IEEEauthorrefmark{3}\vspace*{0.5em}\\ 
\authorblockA{\small \IEEEauthorrefmark{1}Beijing Key Laboratory of Network System Architecture and Convergence,\\ Beijing University of Posts and Telecommunications, Beijing, China 100876,\\ Emails: \protect\url{chenmingzhe@bupt.edu.cn}, \protect\url{ccyin@ieee.org.} \\
\IEEEauthorrefmark{2}Wireless@VT, Bradley Department of Electrical and Computer Engineering, Virginia Tech, Blacksburg, VA, USA, Email: \protect\url{walids@vt.edu.}\\
\IEEEauthorrefmark{3}\small Mathematical and Algorithmic Sciences Lab, Huawei France R \& D, Paris, France, \\Email: merouane.debbah@huawei.com.\\
}\vspace*{-2em}
 }

\maketitle
{\renewcommand{\thefootnote}{\fnsymbol{footnote}}
\footnotetext{A preliminary version of this work was published in \cite{ESTL2017Chen}.}}

\vspace{0cm}
\begin{abstract}
Providing seamless connectivity for wireless virtual reality (VR) users has emerged as a key challenge for future cloud-enabled cellular networks. In this paper, the problem of wireless VR resource management is investigated for a wireless VR network in which VR contents are sent by a cloud to cellular small base stations (SBSs). The SBSs will collect tracking data from the VR users, over the uplink, in order to generate the VR content and transmit it to the end-users using downlink cellular links. For this model, the data requested or transmitted by the users can exhibit correlation, since the VR users may engage in the same immersive virtual environment with different locations and orientations. As such, the proposed resource management framework can factor in such spatial data correlation, so as to better manage uplink and downlink traffic.
This potential spatial data correlation can be factored into the resource allocation problem to reduce the traffic load in both uplink and downlink.
In the downlink, the cloud can transmit $360^\circ$ contents or specific visible contents (e.g., user field of view) that are extracted from the original $360^\circ$ contents to the users according to the users' data correlation so as to reduce the backhaul traffic load. In the uplink, each SBS can associate with the users that have similar tracking information so as to reduce the tracking data size. 
 This data correlation-aware resource management problem is formulated as an optimization problem whose goal is to maximize the users' successful transmission probability, defined as the probability that the content transmission delay of each user satisfies an instantaneous VR delay target. To solve this problem, a machine learning algorithm that uses echo state networks (ESNs) with transfer learning is introduced. By smartly transferring information on the SBS's utility, the proposed transfer-based ESN algorithm can quickly cope with changes in the wireless networking environment due to users' content requests and content request distributions. Simulation results demonstrate that the developed algorithm achieves up to 15.8\% and 29.4\% gains in terms of successful transmission probability compared to Q-learning with data correlation and Q-learning without data correlation. 
 
 \end{abstract}

\vspace{0cm}
{\small \emph{Index Terms}--- virtual reality; resource allocation; echo state networks; transfer learning.}

\section{Introduction}
Augmented and virtual reality applications will be an integral component of tomorrow's wireless networks \cite{ bacstuug2016towards }. Indeed, it is envisioned that by using wireless virtual reality (VR) services, users can engage in unimaginable virtual adventures and games within the confines of their own home. {\color{black}As a VR device is operated over a wireless network, the VR users must send tracking information that includes the users' locations and orientations to the small base stations (SBSs) and, then, the SBSs will use the tracking information to construct $360^\circ$ images and send these images to the users. Therefore, for wireless VR applications, the uplink and downlink transmissions must be jointly considered. Moreover, in contrast to traditional video that consists of $120^\circ$ images, a VR video consists of high-resolution $360^\circ$ vision with three-dimensional surround stereo. This new type of VR video requires a much higher data rate than that of traditional mobile video.} In consequence, to enable a seamless and pervasive wireless VR experience, it is imperative to address many wireless networking challenges that range from low latency and reliable networking to effective communication, computation, and resource management \cite{rosedale2017virtual}.

To address these challenges, a number of recent works on wireless VR recently appeared such as in \cite{bacstuug2016towards,rosedale2017virtual,ahn2017delay,singh2017high,chakareski2017vr,VROWNchen,kasgari2018human,ESTL2017Chen,park2018urllc,sun2018communication,8319985,8377419}. In \cite{bacstuug2016towards} and \cite{rosedale2017virtual}, qualitative surveys are provided to motivate the use of VR over wireless networks and to present the associated opportunities. The authors in \cite{ahn2017delay} proposed an efficient wireless VR communication scheme using a wireless local area network.
The work in \cite{singh2017high} studied the problem of VR tracking and positioning. However, the works in \cite{ahn2017delay} and \cite{singh2017high} only analyze a single VR metric such as tracking accuracy and do not develop a specific wireless-centric VR model. In \cite{chakareski2017vr}, the authors proposed a new algorithm for cached content replacement to minimize transmission delay.
 The work in \cite{VROWNchen} introduced a model for wireless VR services that takes into account the tracking accuracy, processing latency, and wireless transmission latency and, then, developed a game-theoretic approach for VR resource management. The authors in \cite{kasgari2018human} studied the resource allocation problem with a brain-aware QoS constraint. In \cite{park2018urllc}, the authors investigated the problem of concurrent support of visual and haptic perceptions over wireless cellular networks. In \cite{sun2018communication}, the authors developed a framework for mobile VR delivery by leveraging the caching and computing capabilities of mobile VR devices in order to alleviate the traffic burden over wireless networks. A communications-constrained mobile edge computing framework is proposed in \cite{8319985} to reduce wireless resource consumption. The authors in \cite{8377419} presented a new scheme for proactive computing and millimeter wave transmission for wireless VR networks.
 However, the works in \cite{chakareski2017vr,VROWNchen,kasgari2018human,park2018urllc,sun2018communication,8319985,8377419} ignore the correlation between the data of VR users. {\color{black}In fact, VR data (tracking data or VR image data) pertaining to different users can be potentially correlated because the users share a common virtual environment. For example, when the VR users are watching an event from different perspectives, the cloud has to only send one ${360^ \circ }$ image to the SBSs who, in turn, can rotate the image and send it to the various users. For such scenarios, one can reduce the traffic load on the cellular network, by exploiting such correlation of views among users.}
 {{Note that, in \cite{ESTL2017Chen}, we have studied the problem of data correlation-aware resource allocation in VR networks}. However, our work in \cite{ESTL2017Chen} considered only data correlation among two users and it relied on a very preliminary model that does not consider $360^\circ$ and visible contents transmission for the cloud. In practice, for a given user, a $360^\circ$ VR content can be divided into visible and invisible components. A \emph{visible content} is defined as the component of a $360^\circ$ content that is visible (in the field of view) to a given user. Moreover, our work in \cite{ESTL2017Chen} used a more rudimentary learning algorithm for resource allocation.}

{\color{black}The main contribution of this paper is to address the resource allocation problem for wireless VR networks while taking into account potential correlation among users. We introduce a novel model and associated solution approach using ESN-based transfer learning that enable SBSs to effectively allocate the uplink and downlink resource blocks to the VR users considering the data correlation among the uplink tracking information data and downlink VR content data so as to maximize VR users' successful transmission probabilities. 
 To our best knowledge, \emph{this is the first work to use ESNs as transfer learning for data correlation-aware resource block management and visible content transmission in wireless VR networks.} The primary contributions of this work can thus be summarized as follows:
 
 \begin{itemize}
\item We introduce a new model for VR in which the cloud can transmit $360^\circ$ or visible contents to the SBSs and the SBSs will transmit the visible contents to VR users. Meanwhile, the users will cooperatively transmit tracking information to the SBSs so as to enable them to extract the visible contents from original $360^\circ$ contents. 
\item We then investigate how the cloud can \emph{jointly} optimize uplink and downlink data transmission and resource block allocation. We formulate this joint content transmission and resource block allocation problem as an optimization problem. The goal of this optimization problem is to maximize the users' successful transmission probability. 
 
\item To address this problem, we propose a transfer learning algorithm \cite{5288526} based on ESNs \cite{chen2017machine}. This algorithm is able to smartly transfer information on the learned successful transmission probability across time so as to adapt to the dynamics of the wireless environment due to factors such as changes in the users' data correlation and content request distribution.

\item We analyze how the cloud determines the transmission format ($360^\circ$ or $120^\circ$ content) of each VR content that is transmitted over the cloud-SBSs backhaul links. Analytical results show that the transmission format of each VR content depends on the number of users that request different visible contents, the data correlation among the users, and the backhaul data rate of each VR user.

\item We perform fundamental analysis on the gain, in terms of the successful transmission probability, resulting from the changes of resource block allocation and  content transmission format. This analytical result can provide guidance for action selection in the proposed machine learning approach.  

\item Simulation results demonstrate that our proposed algorithm can achieve, respectively, 15.8\% and 29.4\% gains in terms of the total successful transmission probability compared to Q-learning with data correlation and Q-learning without data correlation. {The results also show that the proposed transfer learning algorithm needs 10\% and 14.3\% less iterations for convergence compared to Q-learning with data correlation and Q-learning without data correlation. }  
\end{itemize}}

 The rest of this paper is organized as follows. The system model and problem formulation are presented in Section \uppercase\expandafter{\romannumeral2}. The ESN-based transfer learning algorithm for resource block allocation is proposed in Section \uppercase\expandafter{\romannumeral3}. In Section \uppercase\expandafter{\romannumeral4}, numerical simulation results are presented and analyzed. Finally, conclusions are drawn in Section \uppercase\expandafter{\romannumeral5}.               

\section{System Model and Problem Formulation}\label{se:system}
%

 \begin{figure}[!t]
  \begin{center}
   \vspace{0cm}
    \includegraphics[width=10cm]{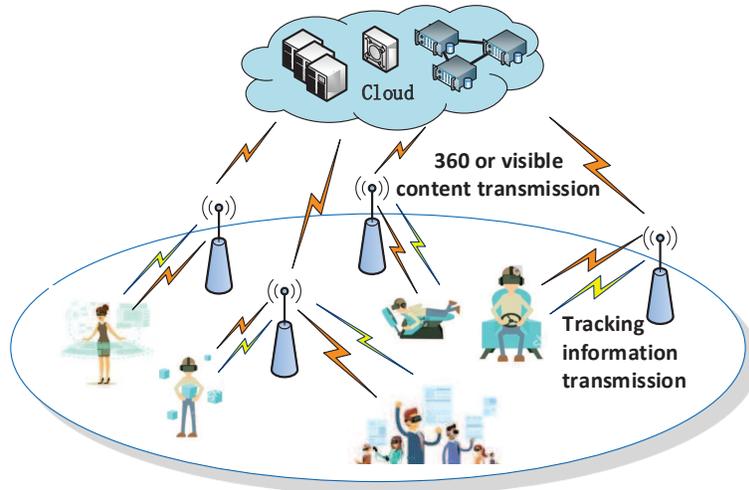}
    \vspace{-0.3cm}
    \caption{\label{systemmodel} The architecture of a VR network that consists of the cloud, SBSs, and VR users.}
  \end{center}\vspace{-0.5cm}
\end{figure}

Consider the wireless network shown in Fig. \ref{systemmodel} that is composed of a set $\mathcal{K}$ of $K$ SBSs that serve, in both uplink and downlink, a set $\mathcal{U}$ of $U$ users. In this network, uplink transmissions are used to carry tracking information (e.g., VR user location and orientation) from the users to the network. Meanwhile, downlink transmissions will carry the actual VR content (e.g., images) to the users. A \emph{capacity-constrained} backhaul connection is considered between the SBSs and the cloud. To serve the VR users, the SBSs will use the cellular band.
The cloud can directly transmit the $360^\circ$ contents to the SBSs or extract the visible contents from the $360^\circ$ contents and transmit them to the SBSs, as shown in Fig. \ref{model}. In our model, a 360$^\circ$ VR content that consists of $360^\circ$ images while a visible content is composed of $120^\circ$ horizontal and $120^\circ$ vertical images\footnote{\color{black}Here, the degree of the visible contents of a user depends on the devices that the users used for engaging in the VR applications. 120$^\circ$ is typically used with HTC Vive \cite{htc}.}, as shown in Fig. \ref{visiblecontent}. If the cloud wants to transmit visible contents to the SBSs, it needs to acquire the users' tracking information from the SBSs.
Let $G_{120^\circ}$ be the data size of each visible content and $G_{360^\circ}$ be the data size of each $360^\circ$ content. 
To reduce the traffic load over SBS-users links, the SBSs will only transmit visible contents to the users, as shown in Fig. \ref{model}.

 \begin{figure}[!t]
  \begin{center}
   \vspace{0cm}
    \includegraphics[width=10cm]{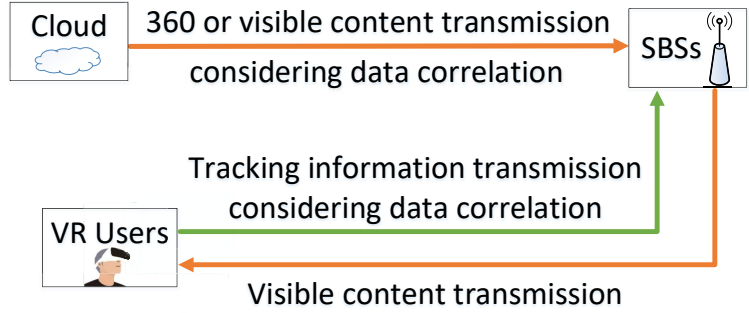}
    \vspace{-0.3cm}
    \caption{\label{model} The content and tracking information transmissions in a VR network.}
  \end{center}\vspace{-0.4cm}
\end{figure}

  \begin{figure}
  \begin{center}
   \vspace{0cm}
    \includegraphics[width=6cm]{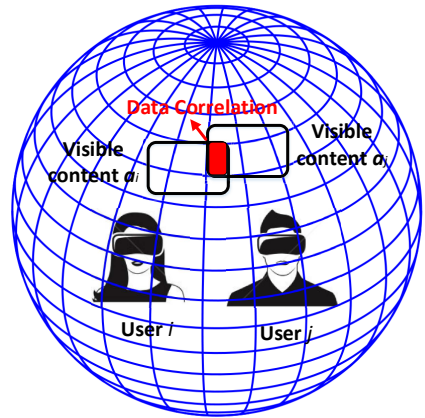}
    \vspace{-0.3cm}
    {\color{black}\caption{\label{visiblecontent} An illustrative example of $360^\circ$ and visible contents. Here, $a_i$ and $a_j$ represent the visible content requested by users $i$ and $j$, respectively. The red region shows the similarities between $a_i$ and $a_j$.}}
  \end{center}\vspace{-0.7cm}
\end{figure}

We adopt an orthogonal division multiple access (OFDMA) scheme. {\color{black}The SBSs use a set $\mathcal{V}$ of $V$ resource blocks for the uplink and a set $\mathcal{S}$ of $S$ resource blocks for the downlink.} We assume that each SBS has a circular coverage area of radius $r$. {\color{black}We also assume that the resource blocks of each SBS are all allocated to the users.} We also consider that each SBS will allocate all of its resource blocks to its associated users. {\color{black}Let $\boldsymbol{q}_{in}=\left[q_{in,1},\ldots, q_{in,N_C}  \right]$ be the content request distribution \cite{8382257} of user $i$ during a period $n$ where $q_{in,k}$ represents the probability that user $i$ requests content $k$ and $N_C$ is the total number of contents that each user can request. For different periods, the content request distribution of each user will be different. We also assume that the users will immediately request a new VR content once they completely receive one content.}

\begin{table}\label{ta:notation}\footnotesize
  \newcommand{\tabincell}[2]{\begin{tabular}{@{}#1@{}}#2\end{tabular}}
\renewcommand\arraystretch{1}
 \caption{
    \vspace*{-0.3em}List of notations}\vspace*{-1.5em}
\centering  
\begin{tabular}{|c||c|c||c|}
\hline
Notation & Description & Notation & Description\\
\hline
$U$ & Number of VR users & $h_{ij}^k$ & Path loss between user $i$ and SBS $j$\\
\hline
$K$ & Number of SBSs & $B$ & Bandwidth of each resource block  \\
\hline
$P_U$& Transmit power of each user  & $N_{aik}$ & Data size of different pixels between visible contents $a_i$ and $a_k$\\
\hline
  $\boldsymbol{y}_{j}'\left(t\right)$& Output b) of ESN & $C_{aik}$ & Data correlation between user $i$ and $k$ \\
\hline
 $\mathcal{C}_a$ & Set of data correlation&$\upsilon^2$ & Variance of the Gaussian noise  \\
\hline
$\mathcal{U}$ & Set of VR users & $g_{ja}$ & Content transmission format\\
\hline
$\mathcal{K}$ & Set of SBSs & $\sigma_{ij}$ & Covariance of tracking information \\
\hline
 $\boldsymbol{y}_{j}\left(t\right)$& Output a) of ESN   &  $V_i^B $ & Fronthaul transmission rate of each user $i$ \\
\hline
  $c_{ij}$ & Downlink data rate of user $i$&$S$ & Number of downlink resource blocks\\
\hline
 $P_{B}$& Transmit power of each SBS & ${D _{ij}}$ & Delay of content transmission\\
\hline
$N_{W}$ &Number of neurons in ESN &${{D}_{ij}^\textrm{U}} $ & Delay of tracking information transmission\\
\hline
 $G_{120^\circ}$ & Data size of each visible content & $Q_{it}$ & Successful transmission of user $i$ at time $t$ \\
\hline
$G_{360^\circ}$& Data size of each $360^\circ$ content & ${\mathbbm{P}_i}$ & Successful transmission probability of user $i$\\
\hline
$c_{ij}^\textrm{UL}$& Uplink data rate of user $i$ &$M\left( {g}_{ja} \right)$& Data size of content $a$ with transmission format ${g}_{ja} $\\
\hline
$\boldsymbol{x}_{j}\left(t\right)$& Input of ESN at time $t$&$K_i\left(\sigma_i^{\max}\right)$& Data size of tracking information\\
\hline
$\boldsymbol{a}_j$&An action of SBS $j$& $ V$ & Number of uplink resource blocks \\
\hline
$N_{ja}$& Number of actions of SBS $j$&$d_{ij}$ & Distance between BS $j$ and user $i$\\
\hline
\end{tabular}
\end{table}


 
   
\subsection{Proposed Model for VR Data Correlation}
\subsubsection{Model for Data Correlation in the Downlink}    
{\color{black}In the downlink, the cloud will transmit VR contents to the SBSs. For VR applications, each VR content consists of $360^\circ$ images, which means that the cloud must transmit all of the surrounding virtual environment information to each SBS. When the users engage in the same VR activity or play the same immersive games, they will share the same virtual environment information thus making their downlink data correlated.}
The network can better manage its backhaul traffic if it exploits the correlation of data for users that are engaged in the same VR game or activity.
 For instance, as the users are engaged in a common virtual football or basketball game, the cloud can directly transmit the entire $360^\circ$ contents to the SBSs. Then, the SBSs can extract the unique visible contents from the  $360^\circ$ data and transmit them to the users.    
 However, if the users are engaged in different VR activities, then, their VR data correlation between the users is low and, therefore, different VR contents must be sent to those users.
 
To properly define VR data correlation, {\color{black}we consider that the VR users that request the similar $360^\circ$ content will request different visible contents. This is because the users may have different location and orientation.} In consequence, they may observe different components of a $360^\circ$ content. Let $a_i$ be the visible content that is extracted from $360^\circ$ content $a$ and requested by user $i$. {\color{black}Let $C_{aik}$ be the fraction of the same pixels between visible contents $a_i$ and $a_k$ that users $i$ and $k$ request, respectively.} 
In this context, as users $i$ and $k$ are connected to the same SBS, the cloud needs to only transmit $G_{120^\circ}\left(2-{C _{aik}}\right)$ Mbits data of visible contents to that SBS. Let ${\mathcal{C}_a^n}$ be the set of data correlation among any $n$ users that {request different visible contents extracted from $360^\circ$ content $a$}. For example, for a given SBS that is serving three users (users 1, 2, and 3), ${\mathcal{C}_a^2}=\left\{ {C _{a12}}, {C _{a13}}, {C _{a23}} \right\}$ and ${\mathcal{C}_a^3}=\left\{ {C _{a123}} \right\}$. The cloud can select the appropriate content format (visible or $360^\circ$ content) for each content transmission. Let $g_{ja} \in \left\{120^\circ, 360^\circ \right\}$ be the content transmission format. $g_{ja}=120^\circ$ implies that the cloud will transmit visible contents that are extracted from $360^\circ$ content $a$ to SBS $j$, otherwise, the cloud will transmit $360^\circ$ content $a$ to SBS $j$.   
%
%
%
    
\subsubsection{Data Correlation Model for the Uplink}  

In the uplink, the tracking information is collected by the sensors that are located at the VR users' headsets. The VR user sensors need to scan their environment and send the information related to their environment to the SBSs. {\color{black}Therefore, the VR users' sensors will collect data from a similar environment and, hence, this data will be correlated~\cite{Wu:2017:DEU:3083187.3083210}. In consequence, the tracking information of wireless VR users will have data correlation.}
 {\color{black}To model each user's tracking information that is collected by the VR sensors, we adopt a Gaussian field model similar to the one use in \cite{cressie2015statistics}.} 
  Let the tracking data, $X_i$, gathered by each VR user $i$ be a Gaussian random variable with variance $\sigma_i^2$ and mean $\mu_i$. {\color{black}This model for each user's tracking information is constructed based on the historical tracking information collected by the SBSs. The SBSs can use the expectation maximization algorithms in \cite{Bishop2006Pattern} to determine the parameters of  the Gaussian field model.} In VR applications, observations from neighboring VR devices will often be correlated. {\color{black}For example, VR users that are located close in proximity or within a common location (VR theater or stadium) may request similar visible VR contents. Based on the model of tracking information, we employ the power exponential model \cite{vuran2004spatio} to model the spatial correlation of the VR tracking data since the power exponential model can capture how the distance between two users impacts data correlation.} Consequently, for any two VR users $i$ and $j$ located at a distance $d_{ij}$, the covariance $\sigma_{ij}$, will be given by \cite{8254516}:
 \begin{equation}\label{eq:sigma}
 {\sigma _{ij}} = {\mathop{\rm cov}} \left( {{X_i},{X_j}} \right) = {\sigma _i}{\sigma _j}{e^{ - {{d_{ij}^\alpha } \mathord{\left/
 {\vphantom {{d_{ij}^\alpha } \kappa }} \right.
 \kern-\nulldelimiterspace} \kappa }}},
\end{equation}    
 where $\alpha$ and $\kappa$ are parameters that capture how sensitive data correlation will be to distance variations. 
 
 \subsection{Transmission Model}
In the studied model, the cloud will first transmit the VR contents requested by the users to the SBSs. {Then, the SBSs will transmit the contents received from the cloud to its associated users}. {Meanwhile, the users will transmit their tracking information to the SBSs. If the cloud determines that it needs to transmit the visible contents to the SBSs, the SBSs must transmit the tracking information to to the cloud.} 
However, in the uplink, we only consider the tracking information transmission over the wireless SBS-users links and ignore the delay of tracking information transmission over wired backhaul links. This is due to the large capacity of the wired backhaul and the relatively small data size of tracking information compared to the VR content data size.    
For each user $i$, the transmission rate of each VR content from the cloud to the SBS can be given by \cite{7438747}: 
\begin{equation}
 {V_i^{\textrm{B}}} = \frac{{{V^\textrm{B}}}}{U},
 \end{equation}
 where $V^\textrm{B}$ is the maximum downlink backhaul link rate, for all VR users.
 We assume that the backhaul rates for all users are equal and we do not consider backhaul transmission optimization. 
In a VR model, we need to capture the VR transmission requirements such as high data rate, low delay, and accurate tracking. Hence, we consider the transmission delay as the main VR QoS metric of interest. 
For user $i$ associated with SBS $j$, the downlink rate can be given by:
\begin{equation}\label{eq:cd}
{c_{ij}\left(\boldsymbol{s}_{ij}\right)} = \sum\limits_{k = 1}^{{S}} {s_{ij,k}B{{\log }_2}\left( {1 + \gamma _{ij,k}} \right)},
\end{equation}
where $\boldsymbol{s}_{ij} = \left[ {s_{ij,1}, \ldots ,s_{ij,{S}}} \right]$ represents the resource blocks vector with $s_{ij,k} \in \left\{ {0,1} \right\}$. $s_{ij,k}=1$ indicates that SBS $j$ allocates resource block $k$ to user $i$, and $s_{ij,k}=0$, otherwise. ${\gamma _{ij,k}=\frac{{{P_B}{h_{ij}^k}}}{{{\upsilon  ^2} + \sum\limits_{l \in \mathcal{K},l \ne j} {{P_B}{h_{il}^k}} }}}$ represents the signal-to-interference-plus-noise ratio (SINR) between SBS $j$ and user $i$ on resource block $k$. $B$ represents the bandwidth of each resource block and $P_B$ represents each SBS $j$'s transmit power. {\color{black} Moreover, $ \sum\limits_{l \in \mathcal{K},l \ne j} {{P_B}{h_{il}^k}} $ is the interference caused by other SBSs using resource block $k$ for VR content transmission}.
 $\upsilon^2$ represents the variance of the Gaussian noise and $h_{ij}^k=g_{ij}^kd_{ij}^{-\beta}$ represents the path loss between SBS $j$ and user $i$ with $g_{ij}^k$ being the Rayleigh fading parameter. $d_{ij}$ represents the distance between SBS $j$ and user $i$, and $\beta$ represents the path loss exponent. 
For user $i$ associated with SBS $j$, the delay of transmitting content $a$ from the cloud to user $i$ is:
\begin{equation}\label{eq:Dsg}
 D_{ij}\left(\boldsymbol{s}_{ij},{g}_{ja}\right) = \frac{{{ G_{120\circ} }}}{{{c_{ij}\left(\boldsymbol{s}_{ij}\right)}}}+\frac{{{M\left( {g}_{ja} \right)}}}{U_{ja}V_{i}^\textrm{B}},
 \end{equation}
  where $U_{ja}$ is the number of users that are associated with SBS $j$ and request visible contents $a$. $M\left( {g}_{ja}\right) $ is the data size of content $a$ must be transmitted from the cloud to SBS $j$. 
The first term in (\ref{eq:Dsg}) is the time that SBS $j$ uses to transmit content $a$ to user $i$ and the second term in (\ref{eq:Dsg}) is the time that the cloud uses to transmit content $a$ to SBS $j$. {\color{black}From (\ref{eq:Dsg}), we can see that if several users request the same content, the cloud
will perform only one transmission over the backhaul using the sum backhaul data rate of these users.}
We let $P_U$ be the transmit power of each user and $B$ be the bandwidth of each uplink resource block. In this case, for user $i$ connected to SBS $j$, the uplink data rate can be given by:
\begin{equation}\label{eq:cv}
{c_{ij}\left(\boldsymbol{v}_{ij}\right)} = \sum\limits_{k = 1}^{{V}} {v_{ij,k}B{{\log }_2}\left( {1 + \gamma _{ij,k}^\textrm{u}} \right)},
\end{equation}
where $\boldsymbol{v}_{ij} = \left[ {v_{ij,1}, \ldots, v_{ij,{V}}} \right]$ represents the vector of uplink resource blocks that SBS $j$ allocates to user $i$ with $v_{ij,k} \in \left\{ {0,1} \right\}$. $v_{ij,k}=1$ indicates that SBS $j$ allocates uplink resource block $k$ to user $i$, {\color{black}$v_{ij,k}=0$}, otherwise.
${\gamma _{ij,k}^\textrm{u}=\frac{{{P_U}{h_{ij}^k}}}{{{\upsilon ^2} + \sum\limits_{l \in \mathcal{U}^k,l \ne j} {{P_U}{h_{il}^k}} }}}$ represents the SINR between  SBS $j$ and user $i$ on resource block $k$. Here, $\mathcal{U}^k$ is the set of VR users that transmit their tracking information over uplink resource block $k$. 
The time that user $i$ needs to transmit tracking information to SBS $j$ can be given by:
\begin{equation}\label{eq:DijU}
 D_{ij}^\textrm{U}\left( \boldsymbol{v}_{ij},\sigma_{i}^{\max}\right)=\frac{{{K_{i}\left(\sigma_{i}^{\max}\right)}}}{{{c_{ij}\left(\boldsymbol{v}_{ij}\right)}}},
 \end{equation}
where ${K_{i}\left(\sigma_{i}^{\max}\right)}$ is the size of tracking information data {that user $i$ needs to transmit} and $\sigma _{i}^{\max}=\mathop {\max }\limits_{k\in \mathcal{U}_j, k \ne i} \left( {{\sigma _{ik}}} \right)$ represents {\color{black}the maximum covariance for each user $i$.} From (\ref{eq:DijU}), we can see that finding data correlation among the users allows minimizing the uplink data traffic that the users need to transmit.

  \subsection{Model of Successful Transmission Probability} 
Next, we 
derive the successful transmission probability.
  
 For each user $i$ that requests content $a_{it}$ at time $t$, a successful transmission is defined as:
 \begin{equation}\label{eq:S}
 Q_{it}\!\left(a_{it}, \boldsymbol{s}_{ij},  \boldsymbol{v}_{ij}, g_{ja_{it}},\sigma_{i}^{\max} \right)\!=\!{{\mathbbm{1}_{\!\left\{ {D_{ijt}\left(\boldsymbol{s}_{ij},g_{ja_{it}}\!\right)+ D_{ijt}^\textrm{U}\left( \boldsymbol{v}_{ij}, \sigma_{i}^{\max}\right) \le \gamma _\textrm{D}} \!\right\}}}},
 \end{equation}
 where $D_{ijt}$ and $D_{ijt}^\textrm{U}$ represent the delay of downlink and uplink transmission at time $t$, respectively. $\gamma _\textrm{D}$ is the delay requirement for each VR user. Based on (\ref{eq:S}), the probability of successful transmission can be given by: 
 \begin{equation}\label{eq:Pi}
\begin{split}
{\mathbbm{P}_i}\left( \boldsymbol{s}_{ij}, \boldsymbol{v}_{ij} \right) = \mathop \frac{1}{T}\sum\limits_{t = 1}^T Q_{it}\!\left(a_{it}, \boldsymbol{s}_{ij},  \boldsymbol{v}_{ij}, g_{ja_{it}},\sigma_{i}^{\max} \right),
\end{split}
\end{equation}
{\color{black}where $T$ denotes the number of time slots used to evaluate the successful transmission probability of each user.} 
 Now, given the resource block vectors $\boldsymbol{s}_{ij}$ and $ \boldsymbol{v}_{ij}$, {users association can be determined. Once the user association is determined, the data correlation of each user is also determined, since the data correlation depends on the user association and the data request by each user. In consequence, {one can consider only the optimization of the resource block allocation for each SBS to maximize the successful transmission probability of each user.} }  
To capture the gain that stems from the allocation of the resource blocks, we state the following result:
 
 \begin{theorem}\label{th:2}
\emph{Given the uplink and downlink resource blocks $\boldsymbol{v}_{ij}$ and $\boldsymbol{s}_{ij}$ as well as the uplink and downlink data correlation $\sigma_i^{\max}$ and $M\left( {g}_{ja}\right) $, 
the gain of user $i$'s successful transmission probability due to an increase in the amount of allocated resource blocks and the change of content transmission format includes:}

\emph{\romannumeral1) The gain due to an increase of the number of uplink resource blocks, $\Delta {\mathbbm{P}_i}$, is given by:
\begin{equation}
 \setlength{\abovedisplayskip}{4 pt}
\setlength{\belowdisplayskip}{4 pt}
\Delta {\mathbbm{P}_i}= \mathop \frac{1}{T}\sum\limits_{t = 1}^T {{\mathbbm{1}_{\left\{   \gamma_\textrm{DU}-c_{ij}\left(\Delta\boldsymbol{v}_{ij}\right) \le {c_{ij}\left(\boldsymbol{v}_{ij}\right) < \gamma_\textrm{DU} }  \right\}}}},
\end{equation}
where {\color{black}$\Delta {\mathbbm{P}_i}$ represents the change of the successful transmission probability of each user $i$} and $\gamma_\textrm{DU}=\frac{ {{{K_{i}\left(\sigma_{i}^{\max}\right)}}}   }{\gamma _D-D_{ijt}\left( \boldsymbol{s}_{ij}, g_{ja_{it}} \right)}$.}
 
\emph{\romannumeral2) The gain due to an increase of the number of downlink resource blocks, $\Delta {\mathbbm{P}_i}$, is given by:
\begin{equation}
\setlength{\abovedisplayskip}{4 pt}
\setlength{\belowdisplayskip}{4 pt}
\Delta {\mathbbm{P}_i}=\mathop \frac{1}{T}\sum\limits_{t = 1}^T {{\mathbbm{1}_{\left\{ \gamma_\textrm{DD}-c_{ij}\left(\Delta\boldsymbol{s}_{ij}\right) \le {c_{ij}\left(\boldsymbol{s}_{ij}\right) < \gamma_\textrm{DD} } \right\}}}}
\end{equation}
where $\gamma_\textrm{DD}=\frac{G_{120^\circ}}{\gamma _D-D_{ijt}^\textrm{U}\left( \boldsymbol{v}_{ij},\sigma_{i}^{\max}\right) -\frac{{{M\left( {g}_{ja} \right)}}}{V_{i}^\textrm{D}} }$.}

\emph{\romannumeral3) The gain due to the change of content transmission format, $g_{ja}$, $\Delta {\mathbbm{P}_i}$, can be given by:
\begin{equation}
\Delta {\mathbbm{P}_i}\!\! =\!\!\left\{ {\begin{array}{*{20}{c}}
  {\!\!\!\frac{1}{T}\sum\limits_{t = 1}^T {{\mathbbm{1}_{\left\{ {   {{{M\left( 120^\circ \right)}}}  \leqslant  \left( \gamma_D- \frac{{{ G_{120^\circ} }}}{{{c_{ij}\left(\boldsymbol{s}_{ij}  \right)}}}    -D_{ijt}^\textrm{U}\left( \boldsymbol{v}_{ij},\sigma_{i}^{\max} \right) \right){V_{i}^\textrm{B}}} < {{{M\left( 360^\circ \right)}}}   \right\}}},~\text{if}~M\left( 360^\circ \right)> M\left( 120^\circ \right)} }, \\ 
  {\!\!\!\frac{1}{T}\sum\limits_{t = 1}^T {{\mathbbm{1}_{\left\{ {   {{{M\left( 360^\circ \right)}}}  \leqslant  \left( \gamma_D- \frac{{{ G_{120^\circ} }}}{{{c_{ij}\left(\boldsymbol{s}_{ij}  \right)}}}    -D_{ijt}^\textrm{U}\left( \boldsymbol{v}_{ij},\sigma_{i}^{\max} \right) \right){V_{i}^\textrm{B}}} < {{{M\left( 120^\circ \right)}}}   \right\}}},~\text{if}~M\left( 360^\circ \right)<M\left( 120^\circ \right)} }, \\ 
  {\;\;\;\;\;\;\;\;\;\;\;\;\;\;\;\;\;\;\;\;\;\;\;\;\;\;\;\;\;\;\;\;\;\;\;0,\;\;\;\;\;\;\;\;\;\;\;\;\;\;\;\;\;\;\;\;\;\;\;\;\;\;\;\;\;\;\;\;\;\;\;\;\;\;\;\;~\text{if}~M\left( 360^\circ \right)=M\left( 120^\circ \right)}. \\
\end{array}} \right.
\end{equation} 
}
\end{theorem}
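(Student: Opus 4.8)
The plan is to argue directly from definitions (7) and (8). Since $\mathbbm{P}_i$ is the time-average of the per-slot indicators $Q_{it}$, the gain $\Delta\mathbbm{P}_i$ produced by perturbing one of the three quantities equals $\frac{1}{T}\sum_{t=1}^T\big(Q_{it}^{\textrm{new}}-Q_{it}^{\textrm{old}}\big)$. Each summand lies in $\{-1,0,1\}$; because allocating extra resource blocks can only decrease the delays in (4) and (6), and because the format change considered in part \romannumeral3 is the one toward the smaller backhaul load, the perturbation is (weakly) beneficial, so each summand is in $\{0,1\}$ and equals $1$ exactly on the \emph{transition slots} where the deadline $\gamma_D$ in (7) is missed before the change but met after it. Hence $\Delta\mathbbm{P}_i=\frac{1}{T}\sum_t\mathbbm{1}_{\{\textrm{fail before}\;\wedge\;\textrm{succeed after}\}}$, and each of the three parts reduces to rewriting a delay inequality as a threshold on a data rate (parts \romannumeral1, \romannumeral2) or on a content size (part \romannumeral3).

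For part (\romannumeral1) I would use additivity of the uplink rate (5) over resource blocks: writing $\Delta\boldsymbol{v}_{ij}$ for the indicator vector of the newly allocated uplink blocks and keeping the per-block SINRs $\gamma^{\textrm{u}}_{ij,k}$ fixed, $c_{ij}(\boldsymbol{v}_{ij}+\Delta\boldsymbol{v}_{ij})=c_{ij}(\boldsymbol{v}_{ij})+c_{ij}(\Delta\boldsymbol{v}_{ij})$. Plugging (6) into (7) and isolating $c_{ij}(\boldsymbol{v}_{ij})$, the slot-$t$ success condition before the change is $c_{ij}(\boldsymbol{v}_{ij})\ge K_i(\sigma_i^{\max})/\big(\gamma_D-D_{ijt}(\boldsymbol{s}_{ij},g_{ja_{it}})\big)=\gamma_\textrm{DU}$, and after the change it is $c_{ij}(\boldsymbol{v}_{ij})\ge\gamma_\textrm{DU}-c_{ij}(\Delta\boldsymbol{v}_{ij})$. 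Intersecting ``fail before'' with ``succeed after'' gives exactly $\gamma_\textrm{DU}-c_{ij}(\Delta\boldsymbol{v}_{ij})\le c_{ij}(\boldsymbol{v}_{ij})<\gamma_\textrm{DU}$; when $D_{ijt}\ge\gamma_D$ the threshold $\gamma_\textrm{DU}$ is non-positive and this interval is empty, which is the correct behaviour. Part (\romannumeral2) is the same manipulation on the downlink side: isolating $c_{ij}(\boldsymbol{s}_{ij})$ in the deadline inequality $G_{120^\circ}/c_{ij}(\boldsymbol{s}_{ij})+M(g_{ja})/V_i^{\textrm{D}}+D_{ijt}^{\textrm{U}}\le\gamma_D$ produces the threshold $\gamma_\textrm{DD}$, and replacing $\boldsymbol{s}_{ij}$ by $\boldsymbol{s}_{ij}+\Delta\boldsymbol{s}_{ij}$ shifts the success boundary down by $c_{ij}(\Delta\boldsymbol{s}_{ij})$, yielding the stated interval.

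For part (\romannumeral3) only the backhaul term $M(g_{ja})/(U_{ja}V_i^{\textrm{B}})$ of (4) changes with the format, so the slot-$t$ deadline inequality (7) becomes $M(g_{ja})\le\big(\gamma_D-G_{120^\circ}/c_{ij}(\boldsymbol{s}_{ij})-D_{ijt}^{\textrm{U}}(\boldsymbol{v}_{ij},\sigma_i^{\max})\big)V_i^{\textrm{B}}=:\Theta_t$. A format switch replaces $M(g_{ja})$ by the alternative value, so the slots on which the switch helps are those with $M_{\textrm{new}}\le\Theta_t<M_{\textrm{old}}$, a set that is non-empty only if $M_{\textrm{new}}<M_{\textrm{old}}$, i.e.\ only if the switch reduces the content size. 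Splitting on the sign of $M(360^\circ)-M(120^\circ)$: if $M(360^\circ)>M(120^\circ)$ the beneficial switch is $360^\circ\!\to\!120^\circ$, yielding the event $M(120^\circ)\le\Theta_t<M(360^\circ)$; if $M(360^\circ)<M(120^\circ)$ it is $120^\circ\!\to\!360^\circ$, yielding $M(360^\circ)\le\Theta_t<M(120^\circ)$; and if $M(360^\circ)=M(120^\circ)$ the delay in (4) is unchanged, so $Q_{it}$ and hence $\mathbbm{P}_i$ do not move and $\Delta\mathbbm{P}_i=0$. Averaging over the $T$ slots gives the three-way formula.

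The step I expect to require the most care is the additivity identity $c_{ij}(\boldsymbol{v}_{ij}+\Delta\boldsymbol{v}_{ij})=c_{ij}(\boldsymbol{v}_{ij})+c_{ij}(\Delta\boldsymbol{v}_{ij})$ and its downlink analogue, which rests on treating the per-resource-block SINRs as unchanged when extra blocks are handed to user $i$ --- that is, the cross-tier interference terms and the Rayleigh coefficients on the already-held blocks stay fixed. This is exactly the modelling premise behind (3) and (5), so I would state it at the outset, after which the three parts are elementary rearrangements of inequalities. A minor bookkeeping point is that the backhaul-rate symbol $V_i^{\textrm{B}}$ (written $V_i^{\textrm{D}}$ in part \romannumeral2) denotes the effective per-user backhaul rate appearing in the second term of (4), i.e.\ the $U_{ja}$ factor of (4) is folded into it; I would keep this convention fixed throughout the uplink, downlink, and format arguments.
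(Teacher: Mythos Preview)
Your proposal is correct and follows essentially the same route as the paper's proof: write $\Delta\mathbbm{P}_i$ as a time-average of indicator differences, observe that only ``fail before $\wedge$ succeed after'' slots contribute, and then rewrite the deadline inequality as a rate threshold (parts \romannumeral1, \romannumeral2) or a content-size threshold (part \romannumeral3). Your explicit statement of the additivity assumption $c_{ij}(\boldsymbol{v}_{ij}+\Delta\boldsymbol{v}_{ij})=c_{ij}(\boldsymbol{v}_{ij})+c_{ij}(\Delta\boldsymbol{v}_{ij})$ and of the $V_i^{\textrm{B}}$/$U_{ja}$ bookkeeping is in fact cleaner than what the paper leaves implicit.
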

\begin{proof} See Appendix A.
\end{proof}
Theorem \ref{th:2} shows that the resource block allocation scheme and content transmission format will jointly determine the users' successful transmission probability. Indeed, Theorem \ref{th:2} can provide guidance for action selection in the machine learning approach proposed in Section \ref{se:al}.
Theorem \ref{th:2} also shows that, as $M\left(120^\circ\right)=M\left(360^\circ\right)$, the users' successful transmission probability will remain constant as the content transmission format changes. The reason behind this is that, as the number of users that request the same content increases or the data correlation among the users decreases, the data size of visible contents that the cloud needs to transmit will be equal to or larger than the data size of a $360^\circ$ content.       
   
 \subsection{Problem Formulation}
 Having defined our system model, the next step is to introduce a resource management mechanism to effectively allocate the downlink and uplink resources so as to maximize the successful transmission probability of all users. This problem will be: 
 \addtocounter{equation}{0}
\begin{equation}\label{eq:max}
\begin{split}
\mathop {\max }\limits_{\boldsymbol{s}_{ijn},\boldsymbol{v}_{ijn},{{g}_{jan}}, {\sigma}_i^{\max}}  \sum\limits_{n = 1}^N \sum\limits_{j \in \mathcal{B}}\sum\limits_{i \in \mathcal{U}_j}  {\mathbbm{P}_{in}}\left( \boldsymbol{s}_{ijn}, \boldsymbol{v}_{ijn}\right),
\end{split}
\end{equation}
\vspace{-0.0cm}
\begin{align}\label{c1}
&\;\;\;\;\rm{s.\;t.}\scalebox{1}{$\;\;\;\; v_{ijn,k} \in \left\{ 0, 1 \right\},\;\;\;\forall j \in \mathcal{B}, i \in \mathcal{U}_j, $} \tag{\theequation a}\\
&\scalebox{1}{$\;\;\;\;\;\;\;\;\;\;\;\;\;\; s_{ijn,k} \in \left\{ 0, 1 \right\},\;\;\;\forall j \in \mathcal{B}, i \in \mathcal{U}_j, $} \tag{\theequation b}\\
&\scalebox{1}{$\;\;\;\;\;\;\;\;\;\;\;\; \;\; g_{jan} \in \left\{ {120^\circ}, {360^\circ} \right\},\;\;\;\forall i \in \mathcal{U}_j, a \in \mathcal{C},$}   \tag{\theequation c}
\end{align}
where ${\mathbbm{P}_{in}}$ represents the successful transmission probability during a period $n$ that consists of $T$ time slots. $\boldsymbol{s}_{ijn}$ and $\boldsymbol{v}_{ijn}$ represent the resource block allocation during period $n$. Here, the content request distribution of each user will change as period $n$ varies. (\ref{eq:max}a) and (\ref{eq:max}b) indicate that each uplink and downlink resource block $k$ can be only allocated to one user. (\ref{eq:max}c) indicates that the cloud can transmit visible or $360^\circ$ contents to the SBSs. {\color{black}From (\ref{eq:max}), we can see that the successful transmission probability ${\mathbbm{P}_{in}}$ is optimized over $N$ periods. The users' content request distributions and data correlation will change as period $n$ varies. Since the content request distribution of each user changes during each period, each SBS needs to change its resource allocation strategy so as to optimize the successful transmission probability during each period.
Moreover, from (\ref{eq:max}a) to (\ref{eq:max}c), we can see that the optimization variables $v_{ijn,k}$ and $s_{ijn,k}$ are binary, and $g_{jan}$ is discrete. Consequently, we cannot differentiate the optimization function. Thus, the problem in (\ref{eq:max}) cannot be readily solved by conventional optimization algorithms. In addition, from (\ref{eq:sigma}), we can see that user association (the number of users that request different visible content $a$) and data correlation are coupled. Meanwhile, from (\ref{eq:Dsg}) and (\ref{eq:DijU}), we can see that resource allocation and data correlation jointly determine the transmission delay of each user. Hence, in (\ref{eq:max}), the user association, resource allocation, and data correlation are interdependent and we cannot divide the optimization problem into three separate optimization problems. Also, from (\ref{eq:cd}) and (\ref{eq:cv}), we can see that the data rate of each user $i$ depends on not only the resource allocation scheme that is implemented by its associated SBS but also on the resource allocation schemes performed by other SBSs. Finally, since $\small {\mathbbm{P}_i}\left( \boldsymbol{s}_{ij}, \boldsymbol{v}_{ij} \right)=\mathop \frac{1}{T}\sum\limits_{t = 1}^T {{\mathbbm{1}_{\!\left\{ {D_{ijt}\left(\boldsymbol{s}_{ij},g_{ja_{it}}\!\right)+ D_{ijt}^\textrm{U}\left( \boldsymbol{v}_{ij}, \sigma_{i}^{\max}\right) \le \gamma _\textrm{D}} \!\right\}}}}$ and $\small {{\mathbbm{1}_{\!\left\{ {D_{ijt}\left(\boldsymbol{s}_{ij},g_{ja_{it}}\!\right)+ D_{ijt}^\textrm{U}\left( \boldsymbol{v}_{ij}, \sigma_{i}^{\max}\right) \le \gamma _\textrm{D}} \!\right\}}}}$ are non-convex functions, the problem in (\ref{eq:max}) is challenging to solve.} 

 \section{Echo State Networks for Self-Organizing Resource Allocation} \label{se:al}
In this section, a transfer reinforcement learning (RL) algorithm based on the neural network framework of ESNs \cite{26,chen2016caching,APractical} is introduced. The proposed transfer RL algorithm can be used to find the optimal resource block allocation during each period so as to maximize the users' successful transmission probability. Conventional learning approaches such as Q-learning usually use a matrix to record the information related to the users and networks. As a result, the information that the Q-learning approach must record will exponentially increase, when the number of SBSs and users in the network increases. {In consequence, the Q-learning approach cannot record all of the information related to the users and network. However, the ESN-based transfer RL algorithm exploits a function approximation method to record all of the information related to the network and users. Hence, the proposed ESN-based transfer RL  algorithm can be used for large networks with dense users.} Moreover, the users' content request distributions and data correlation will change as time elapses. Traditional learning approaches such as \cite{chu2018reinforcement} must {\color{black}re-implement the learning process} as the users' content request distribution and data correlation change. However, the  ESN-based transfer RL algorithm can transform the already learned resource block allocation policy into the new resource block allocation policy that must be learned as the users' content request distribution and data correlation change so as to improve the convergence speed. 

\subsection{Components of ESN-based Transfer RL Algorithm}   

 \begin{figure}[!t]
  \begin{center}
   \vspace{0cm}
    \includegraphics[width=9cm]{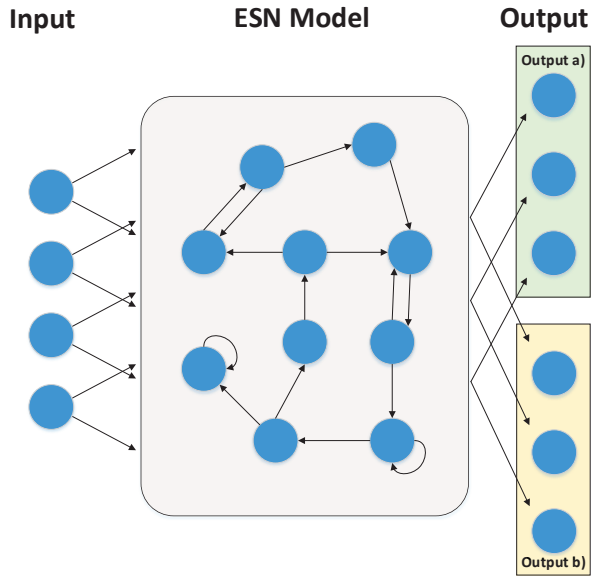}
    \vspace{-0.2cm}
    \caption{\label{algorithm} The components of the ESN-based transfer RL algorithm. Here, output a) combined with the input and the ESN model is used as an RL algorithm so as to find the optimal resource allocation scheme while output b) combined with the input and the ESN model acts as a transfer learning algorithm to transfer the alreadly learned information to new environments.}
  \end{center}\vspace{-0.85cm}
\end{figure}

The ESN-based transfer RL algorithm of each SBS $j$ consists of four components: (a) input, (b) action, (c) output, and d) ESN model, as shown in Fig. \ref{algorithm}, which can be given by:
\begin{itemize}
\item \emph{Input:} The input of the ESN-based transfer RL algorithm the strategy index of the SBSs and period $n$, which is $\boldsymbol{x}_{j}\left(t\right)=\left[ {\pi_{1}\left(t\right), \cdots , \pi_{B}\left(t\right), n} \right]^{\mathrm{T}}$ where $\pi_{k}\left(t\right)$ is the index of a strategy that SBS $k$ uses at time $t$. Here, the strategies of each SBS are determined by the $\epsilon$-greedy mechanism \cite{30}. 
 \item \emph{Action:} The action $\boldsymbol{a}_j$ of each SBS $j$ consists of downlink resource allocation vector $\boldsymbol{s}_j=\left[\boldsymbol{s}_{1j},\boldsymbol{s}_{2j}, \dots , \boldsymbol{s}_{{U}_jj}\right]$ and uplink resource allocation vector $\boldsymbol{v}_j=\left[\boldsymbol{v}_{1j},\boldsymbol{v}_{2j}, \dots , \boldsymbol{v}_{{U}_jj}\right]$. $U_j$ represents the number of VR users located within the coverage of SBS $j$.   
 \item \emph{Output:} The output of the ESN-based transfer RL algorithm consists of two components: a) predicted successful transmission probability and b) predicted variation in the successful transmission probability when the users' content request distribution and data correlation change. Output a) is used to find the relationship between the strategies $\boldsymbol{\pi}_j$, actions $\boldsymbol{a}_j$, and users' successful transmission probability $\sum\limits_{i \in \mathcal{U}_j}  {\!\mathbbm{P}_{in}}\!\left( \boldsymbol{a}_{j}\left(t\right)\right)$. Therefore, output a) of the ESN-based transfer RL algorithm can be given by $\boldsymbol{y}_{j}\left(t\right)=\left[ {y_{j\boldsymbol{a}_{j1}}\left(t\right), \cdots , y_{j\boldsymbol{a}_{jN_{ja}}}\left(t\right)} \right]^{\mathrm{T}}$. Here, $ y_{j\boldsymbol{a}_{jn}}\left(t\right)$ represents the predicted total successful transmission probability of SBS $j$ using action $\boldsymbol{a}_{jn}$. $N_{ja}$ is the total number of actions of each SBS $j$.
 
 {\color{black}To calculate $\sum\limits_{i \in \mathcal{U}_j}  {\!\mathbbm{P}_{in}}\!\left( \boldsymbol{a}_{j}\left(t\right)\right)$, we need to determine the transmission format of each content. Given action $\boldsymbol{a}_{j}\left(t\right)$, the user association will be determined. In consequence, for the cloud, the choice of visible or $360^\circ$ content transmission can be given by the following theorem:

\begin{theorem}\label{th:1}
\emph{Given action $ \boldsymbol{a}_{j}\left(t\right)$, the maximum downlink backhaul link rate $V^\textrm{B}$, the set $\mathcal{U}_{ja}$ of ${U}_{ja}$ users that request different visible contents extracted from $360^\circ$ content $a$, and the set of data correlation $\mathcal{C}_{a}=\bigcup\limits_{n = 2}^{{U_{ja}}} {\mathcal{C}_a^n} $, the transmission format of content $a$ can be given by:
\begin{itemize}
\item If $G_{360^\circ} \geqslant  L_a\left(\mathcal{C}_{a}\right)$, $g_{ja}=120^\circ$.
\item If $G_{360^\circ} < L_a\left(\mathcal{C}_{a}\right)$, $g_{ja}=360^\circ$.
\end{itemize}
Here, $L_a\left(\mathcal{C}_{a}\right)= G_{120^\circ} \left( {{U_{ja}} - \sum\limits_{n = 2}^{{U_{ja}}} {\sum\limits_{{C_a} \in \mathcal{C}_a^n} {{{\left( { - 1} \right)}^{n - 1}}{C_a}} }    } \right)$.
}

\end{theorem}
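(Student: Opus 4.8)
The plan is to turn the statement into a one-variable comparison: the decision $g_{ja}$ enters the optimization in (\ref{eq:max}) only through the payload $M(g_{ja})$ that the cloud must push over the capacity-constrained backhaul to SBS $j$ for content $a$, so the optimal format is simply whichever of the two admissible values makes that payload smaller, and the claimed rule then follows once $M(\cdot)$ is evaluated in the two cases.

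First I would isolate how $g_{ja}$ affects the objective. Fix an SBS $j$, a $360^\circ$ content $a$, and a period $n$. By (\ref{eq:Dsg}), for a user $i\in\mathcal{U}_{ja}$ and a slot $t$ with $a_{it}=a$ we have $D_{ijt}=\frac{G_{120^\circ}}{c_{ij}(\boldsymbol{s}_{ij})}+\frac{M(g_{ja})}{U_{ja}V_i^{\textrm{B}}}$, in which $c_{ij}(\boldsymbol{s}_{ij})$, the uplink delay $D_{ijt}^{\textrm{U}}$, $U_{ja}$, and $V_i^{\textrm{B}}=V^{\textrm{B}}/U$ are all independent of $g_{ja}$; hence each such $D_{ijt}$ is increasing in $M(g_{ja})$, so reducing $M(g_{ja})$ reduces $D_{ijt}$ for every affected user at once. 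Since $\mathbbm{P}_{in}$ in (\ref{eq:Pi}) is an average of the indicators $\mathbbm{1}\{D_{ijt}+D_{ijt}^{\textrm{U}}\le\gamma_{\textrm{D}}\}$, it is non-increasing in each $D_{ijt}$; therefore $\sum_{i\in\mathcal{U}_{ja}}\mathbbm{P}_{in}$, and hence the whole objective of (\ref{eq:max}), is maximized over $g_{ja}\in\{120^\circ,360^\circ\}$ precisely when $M(g_{ja})$ is minimized. Because distinct triples $(j,a,n)$ act on disjoint collections of delay terms, each $g_{jan}$ can be optimized separately, so it suffices to settle this single comparison.

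Next I would compute $M(g_{ja})$ in the two cases. If $g_{ja}=360^\circ$ the cloud sends the full $360^\circ$ content, so $M(360^\circ)=G_{360^\circ}$. If $g_{ja}=120^\circ$ the cloud must deliver exactly the pixels that SBS $j$ needs to reconstruct every requested visible content $a_i$, $i\in\mathcal{U}_{ja}$; identifying $a_i$ with its pixel set $\mathcal{P}_i$ inside content $a$, that payload equals the size of the union $\bigcup_{i\in\mathcal{U}_{ja}}\mathcal{P}_i$. Using $|\mathcal{P}_i|=G_{120^\circ}$ and, for each $S\subseteq\mathcal{U}_{ja}$ with $|S|=n\ge 2$, $\big|\bigcap_{i\in S}\mathcal{P}_i\big|=G_{120^\circ}C_{aS}$ with $C_{aS}\in\mathcal{C}_a^n$, the inclusion--exclusion principle gives
\begin{equation}
M(120^\circ)=\left|\bigcup_{i\in\mathcal{U}_{ja}}\mathcal{P}_i\right|=\sum_{n=1}^{U_{ja}}(-1)^{n-1}\sum_{\substack{S\subseteq\mathcal{U}_{ja}\\|S|=n}}\left|\bigcap_{i\in S}\mathcal{P}_i\right|=L_a(\mathcal{C}_a),
\end{equation}
which for $U_{ja}=2$ collapses to the payload $G_{120^\circ}(2-C_{aik})$ stated in the model description.

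Finally, I would combine the two steps: $g_{ja}=120^\circ$ is optimal iff $M(120^\circ)=L_a(\mathcal{C}_a)\le G_{360^\circ}=M(360^\circ)$, and $g_{ja}=360^\circ$ is optimal iff $G_{360^\circ}<L_a(\mathcal{C}_a)$, which is exactly the claimed dichotomy. I expect the only substantive step to be the inclusion--exclusion bookkeeping in the second part — aligning the index sets $\mathcal{C}_a^n$ with the $n$-element subsets of $\mathcal{U}_{ja}$ and keeping the alternating signs straight — while the monotonicity of the delay (hence of each affected $\mathbbm{P}_{in}$) in $M(g_{ja})$ and the separability of the problem over $(j,a,n)$ are routine.
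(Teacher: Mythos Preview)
Your proposal is correct and follows essentially the same route as the paper: compute the visible-content payload $M(120^\circ)=L_a(\mathcal{C}_a)$ by inclusion--exclusion over the users' pixel sets, then compare it with $M(360^\circ)=G_{360^\circ}$ and pick the smaller one. The only difference is cosmetic --- the paper derives the formula by enumerating $U_{ja}=2,3,4$ and reading off the pattern, and simply asserts that the cloud picks the format minimizing the backhaul payload, whereas you invoke inclusion--exclusion by name and add an explicit monotonicity argument (delay increasing in $M(g_{ja})$, hence $\mathbbm{P}_{in}$ non-increasing) to justify that minimization step.
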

\begin{proof} See Appendix B.
\end{proof}
 Theorem \ref{th:1} shows that the choice of $360^\circ$ and visible content transmission depends on the data size of the $360^\circ$ and visible contents, the data correlation among the users, and the number of users that request the same content. When the data correlation among the users increases, the cloud prefers to transmit visible contents to the SBSs. In contrast, as the number of users that request the same content decreases, the cloud prefers to transmit the $360^\circ$ content. Based on Theorem 2 and action $\boldsymbol{a}_{j}\left(t\right)$, $\sum\limits_{i \in \mathcal{U}_j}  {\!\mathbbm{P}_{in}}\!\left( \boldsymbol{a}_{j}\left(t\right)\right)$ can be computed.}

The output b) is used to find the relationship between $ {\!\mathbbm{P}_{in}}\!\left( \boldsymbol{a}_{j}\left(t\right)\right)$ and $ {\!\mathbbm{P}_{in+1}}\!\left( \boldsymbol{a}_{j}\left(t\right)\right)$ when SBS $j$ only knows $ {\!\mathbbm{P}_{in}}\!\left( \boldsymbol{a}_{j}\left(t\right)\right)$. This means that the proposed algorithm can transfer the information from the already learned successful transmission probability $ {\!\mathbbm{P}_{in}}\!\left( \boldsymbol{a}_{j}\left(t\right)\right)$ to the new successful transmission probability $ {\!\mathbbm{P}_{in+1}}\!\left( \boldsymbol{a}_{j}\left(t\right)\right)$ that must be learned. 
  The output b) of the ESN-based transfer learning algorithm at time $t$ is the predicted variation in the successful transmission probability when the users' information changes $\boldsymbol{y}'_{j}\left(t\right)=\left[ {y'_{j\boldsymbol{a}_{j1}}\left(t\right), \cdots , y'_{j\boldsymbol{a}_{jN_{ja}}}\left(t\right)} \right]^{\mathrm{T}}$ with 
  ${y}'_{j\boldsymbol{a}_{jk}}\left(t\right)= {\!\mathbbm{P}_{in}}\!\left( \boldsymbol{a}_{jk}\left(t\right)\right)- {\!\mathbbm{P}_{in-1}}\!\left( \boldsymbol{a}_{jk}\left(t\right)\right)$.   
 
 
  \item \emph{ESN Model:} An ESN model of the learning approach can approximate the function of the ESN input $\boldsymbol{x}_{j}\left(t\right)$ and output $\boldsymbol{y}_{j}\left(t \right)$ as well as output $\boldsymbol{y}'_{j}\left(t \right)$. The ESN model is composed of two output weight matrices $\boldsymbol{W}_j^{\textrm{out}}, \boldsymbol{W}_j^{'\textrm{out}} \in {\mathbb{R}^{N_{ja} \times \left(N_w+B+1\right)}}$. Let $\boldsymbol{W}_j^{\textrm{in}} \in {\mathbb{R}^{N_w \times \left(B+1\right)}}$ be the input weight matrix and $\boldsymbol{W}_j \in \mathbb{R}^{N_w \times N_w}$ be the recurrent matrix with $N_w$ being the number of the neurons. Mathematically, $\boldsymbol{W}_j$ can be given by:
 \begin{equation}
\boldsymbol{W}_l=\left[ {\begin{array}{*{20}{c}}
{{0}}&{{0}}& \cdots &{{w}}\\
{{w}}&0&0&0\\
0& \ddots &0&0\\
0&0&{{w}}&0
\end{array}} \right],
\end{equation}
where $w \in \left[0,1\right]$ is a constant. Here, the recurrent weight matrix $\boldsymbol{W}_j$ combined with the output weight matrices can store the historical information of ESN. This information that includes ESN input, neuron states, and output can be used to find the relationship between the ESN input and output. 
  \end{itemize}
  \subsection{ESN-based Transfer RL for Resource Allocation}
  Next, we introduce the process that uses the ESN-based transfer RL algorithm to solve the problem in (\ref{eq:max}). At each time slot $t$, each SBS $j$ will broadcast its strategy  
  to other SBSs. Then, each SBS can set the input of the ESN-based transfer algorithm. Given the input ${\boldsymbol{x}_{j}\left(t\right)}$, each ESN needs to update the states of the neurons located in the ESN model. The states of the neurons will be given by: 
 \begin{equation}\label{eq:state2}
{\boldsymbol{\mu}_{j}\left(t\right)} ={\mathop{f}\nolimits}\!\left( {\boldsymbol{W}_j{\boldsymbol{\mu}_{j}\left(t-1\right)} + \boldsymbol{W}_j^{\textrm{in}}{\boldsymbol{x}_{j}\left(t\right)}} \right),
\end{equation}
where ${\boldsymbol{\mu}_{j}\left(t-1\right)}$ is the neuron state vector at time slot $t-1$ and $f\!\left(x\right)=\frac{{{e^x} - {e^{ - x}}}}{{{e^x} + {e^{ - x}}}}$ is the tanh function. From (\ref{eq:state2}), we can see that the states of the neurons depend not only on the recurrent input but also on the historical states. In consequence, the ESN model can record historical information related to the inputs, states, and outputs of each ESN. Based on the states of the neurons, the ESN-based transfer RL algorithm will combine with the output weight matrix $\boldsymbol{W}_j^\textrm{out}$ to predict the successful transmission probability of each SBS, which can be given by:
\begin{equation}\label{eq:update}
\boldsymbol{y}_{j}\left(t\right) = {\boldsymbol{W}_{j}^{\textrm{out}}\left(t\right)}  \left[ {\begin{array}{*{20}{c}}
  { {{\boldsymbol{\mu}_{j}\left(t\right)}}} \\ 
  {\boldsymbol{x}_{j}\left(t\right)} 
\end{array}} \right],
\end{equation}
where ${\boldsymbol{W}_{j}^{\textrm{out}} \left(t\right)}$ is the output weight matrix at time slot $t$. Meanwhile, the proposed transfer RL approach  combined with ${\boldsymbol{W}_{j}^{'\textrm{out}} \left(t\right)}$ will predict the variation in the successful transmission probability when the users' content request distribution and data correlation change. This prediction process can be given by: 
\begin{equation}\label{eq:update1}
\boldsymbol{y}'_{j}\left(t\right) = {\boldsymbol{W}_{j}^{'\textrm{out}}\left(t\right)}  \left[ {\begin{array}{*{20}{c}}
  { {{\boldsymbol{\mu}_{j}\left(t\right)}}} \\ 
  {\boldsymbol{x}_{j}\left(t\right)} 
\end{array}} \right].
\end{equation}
From (\ref{eq:update}) and (\ref{eq:update1}), we can see that, to enable an ESN to predict different outputs (i.e., $\boldsymbol{y}'_{j}\left(t\right) $ or $\boldsymbol{y}_{j}\left(t\right) $), we only need to adjust the output weight matrix of a given ESN. The adjustment of the output weight matrix ${\boldsymbol{W}_{j}^{\textrm{out}}}$ can be given by: 
\begin{equation}\label{eq:w2}
\begin{split}
&{\boldsymbol{W}_{jk}^{\textrm{out}}\left(t + 1\right)} = {\boldsymbol{W}_{jk}^{\textrm{out}}\!\left(t\right)} \!+\! {\lambda}\! \left( { \sum\limits_{i \in \mathcal{U}_j}  {\!\mathbbm{P}_{in}}\!\left( \boldsymbol{a}_{jk}\left(t\right)\right)-y_{ja_{jk}\left(t\right)}\left(t\right)}\! \right)\!{{\boldsymbol{\mu}}_{j}^{\mathrm{T}}\left(t\right)},
\end{split}
\end{equation}
where $\lambda$ is the learning rate, ${\boldsymbol{W}_{jk}^{\textrm{out}}\left(t + 1\right)} $ is row $k$ of the output weight matrix ${\boldsymbol{W}_{j}^{\textrm{out}}\left(t + 1\right)}$, and $\sum\limits_{i \in \mathcal{U}_j}  {\!\mathbbm{P}_{in}}\!\left( \boldsymbol{a}_{jk}\left(t\right)\right)$ is the actual successful transmission probability resulting from SBS $j$ using action $\boldsymbol{a}_{jk}\left(t\right)$. Similarly, ${\boldsymbol{W}_{j}^{'\textrm{out}}}$ can be adjusted based on the following equation: 
\begin{equation}\label{eq:w3}
\begin{split}
&{\boldsymbol{W}_{jk}^{'\textrm{out}}\left(t + 1\right)} ={\boldsymbol{W}_{jk}^{'\textrm{out}}\!\left(t\right)} + {\lambda'}\! \left( { \sum\limits_{i \in \mathcal{U}_j} {\!\mathbbm{P}_{in}}\!\left( \boldsymbol{a}_{jk}\left(t\right)\right)\!-\!\sum\limits_{i \in \mathcal{U}_j}  {\!\mathbbm{P}_{in-1}}\!\left( \boldsymbol{a}_{jk}\left(t\right)\right)-y_{ja_{jk}\left(t\right)}\left(t\right)}\! \right)\!{{\boldsymbol{\mu}}_{j}^{\mathrm{T}}\left(t\right)},
\end{split}
\end{equation}
where ${\lambda'}$ is the learning rate (${\lambda'}  \ll  {\lambda} $). Based on (\ref{eq:state2})-(\ref{eq:w3}), the ESN-based transfer RL algorithm can predict: a) the successful transmission probability resulting from each action that SBS $j$ takes and b) the variation in the successful transmission probability when the users' information changes. In consequence, each SBS $j$ will first use the output a) of the ESN-based transfer RL algorithm to find the optimal resource allocation scheme so as to maximize the users' successful transmission probability. Then, as the users' content request distribution or data correlation change, each SBS $j$ can use the outputs b) to find the relationship between the already learned successful transmission probability and the new successful transmission probability that must be learned. In consequence, each SBS $j$ can directly transfer the already learned successful transmission probability to the new successful transmission probability so as to increase the convergence speed.   
The proposed approach that is implemented by each SBS $j$ is summarized in Table~I. 

\subsection{Complexity and Convergence}
{\color{black}
With regards to the computational complexity, the complexity of the proposed algorithm depends on the action that is performed at each iteration. The proposed algorithm is used to find the optimal action. As the number of iterations needed to find and perform the optimal action increases, the complexity of the proposed algorithm increases. However, the action selection depends on the $\epsilon$-greedy mechanism which will also change with time. Therefore, for a very general case, we cannot quantitatively analyze the complexity of the proposed algorithm. Hence, we can only analyze the worst-case complexity of the proposed algorithm.       
Since the worst-case for each SBS is to traverse all actions, the worst-case complexity of the proposed algorithm is $O(\left| {{\mathcal{A}_{1}}}\right| \times \dots \times \left| {{\mathcal{A}_{K}}}\right|)$ where $\left| {{\mathcal{A}_{j}}}\right|$ denotes the total number of actions of each SBS $j$. However, the worst-case complexity pertains to a rather unlikely scenario in which all SBSs choose their optimal resource allocation schemes after traversing all other resource allocation schemes during each period $n$. Moreover, the proposed algorithm uses a function approximation method to find the relationship between the actions, states, and utilities. In this context, the proposed algorithm will not need to traverse all actions to find this relationship. In addition, ESNs are a type of recurrent neural networks which can use historical input data to find the relationship between actions, states, and utilities which can reduce the training complexity.   
Furthermore, unlike the existing learning algorithms such as long short term memory based RL algorithms \cite{8359094} that need to calculate the gradients of all neurons in the hidden and input layers, the proposed algorithm only need to update the output weight matrix. 
Moreover, at each iteration, the proposed transfer learning algorithm only needs to update one row of each output weight matrix. In particular, since $\boldsymbol{W}_j^{\textrm{out}}, \boldsymbol{W}_j^{'\textrm{out}} \in {\mathbb{R}^{N_{ja} \times \left(N_w+B+1\right)}}$, the proposed algorithm must update $\boldsymbol{W}_{jk}^{\textrm{out}}, \boldsymbol{W}_{jk}^{'\textrm{out}} \in {\mathbb{R}^{1 \times \left(N_w+B+1\right)}}$. This will also significantly reduce the training complexity of our algorithm. Finally, compared to the existing RL algorithms, the proposed transfer RL algorithm can transform the already learned resource allocation policy into the new resource allocation policy that must be learned so as to reduce the number of iterations needed for training. 
}

{For the convergence of the ESN-based transfer RL approach, we can directly use the result of \cite[Theorem 2]{VROWNchen} which showed that, for each action $\boldsymbol{a}_{jk}\left(t\right)$, the outputs of a given ESN will converge to ${\!\mathbbm{P}_{in}}\!\left( \boldsymbol{a}_{jk}\left(t\right)\right)$ and ${\!\mathbbm{P}_{in}}\!\left( \boldsymbol{a}_{jk}\left(t\right)\right)- {\!\mathbbm{P}_{in-1}}\!\left( \boldsymbol{a}_{jk}\left(t\right)\right)$ via adjusting the values of the input and output weight matrices, as follows.

{\begin{corollary}[follows from \cite{VROWNchen}]\label{co:1}\emph{ The ESN-based transfer learning algorithm of each SBS $j$ converges to the utility values ${\!\mathbbm{P}_{in}}\!\left( \boldsymbol{a}_{jk}\left(t\right)\right)$ and ${\!\mathbbm{P}_{in}}\!\left( \boldsymbol{a}_{jk}\left(t\right)\right)- {\!\mathbbm{P}_{in-1}}\!\left( \boldsymbol{a}_{jk}\left(t\right)\right)$, if any following conditions is satisfied:
\begin{itemize}
\item[\romannumeral1)] $\lambda$ and $\lambda'$ are constant, and $\mathop {\min }\limits_{\boldsymbol{W}_{ji}^\textrm{in},{\boldsymbol{x}_{\tau ,j}},{\boldsymbol{x}'_{\tau ,j}}} \boldsymbol{W}_{ji}^\textrm{in}\left({\boldsymbol{x}_{\tau,j}}-{\boldsymbol{x}'_{\tau',j}}\right) \ge 2$, where $\boldsymbol{W}_{ji}^\textrm{in}$ represents the row $i$ of $\boldsymbol{W}_{j}^\textrm{in}$.
\item[\romannumeral2)] $\lambda$ and $\lambda'$  satisfy the Robbins-Monro conditions {\color{black}(${\lambda\left(t\right)} > 0,\sum\nolimits_{t = 0}^\infty  {{\lambda\left(t\right)} = +\infty ,\sum\nolimits_{t = 0}^\infty  {\lambda^2\left(t\right) <+ \infty } } $ where $\lambda\left(t\right)$ is the learning rate at time $t$.)} \cite{26}.
\end{itemize}
}
\end{corollary}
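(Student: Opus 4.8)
The plan is to obtain the corollary as a direct consequence of \cite[Theorem 2]{VROWNchen} through a reduction. The first thing I would point out is that the two output-weight matrices evolve independently of one another: the neuron-state recursion \eqref{eq:state2} depends only on $\boldsymbol{x}_j(t)$, $\boldsymbol{W}_j$ and $\boldsymbol{W}_j^{\textrm{in}}$; update \eqref{eq:w2} involves only $\boldsymbol{W}_j^{\textrm{out}}$ together with its target $r_k(t)=\sum_{i\in\mathcal{U}_j}\mathbbm{P}_{in}(\boldsymbol{a}_{jk}(t))$; and update \eqref{eq:w3} involves only $\boldsymbol{W}_j^{'\textrm{out}}$ together with its target $r'_k(t)=\sum_{i\in\mathcal{U}_j}\mathbbm{P}_{in}(\boldsymbol{a}_{jk}(t))-\sum_{i\in\mathcal{U}_j}\mathbbm{P}_{in-1}(\boldsymbol{a}_{jk}(t))$. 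Hence I would analyse each recursion on its own, and in each case it has exactly the row-wise stochastic-approximation (Widrow--Hoff) update whose convergence is proved in \cite[Theorem 2]{VROWNchen}; applying that theorem once with $r_k(t)$ and once with $r'_k(t)$ gives the two limits claimed for $\boldsymbol{y}_j(t)$ and $\boldsymbol{y}'_j(t)$.

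Second, I would check the structural prerequisites needed to invoke the cited theorem. Both targets are finite sums of empirical averages of the indicator random variables of \eqref{eq:S}, so $r_k(t)\in[0,U_j]$ and $r'_k(t)\in[-U_j,U_j]$ are bounded and the associated least-squares fixed point is well defined. The neuron-state sequence produced by \eqref{eq:state2} is bounded since the activation is the $\tanh$ function, and the single-cycle recurrent matrix used here has eigenvalues $w\zeta$ with $\zeta$ ranging over the $N_w$-th roots of unity, so $\rho(\boldsymbol{W}_j)=w\le 1$ and the echo-state/boundedness hypothesis of \cite[Theorem 2]{VROWNchen} is met. Condition \romannumeral1) of the corollary is precisely the separation hypothesis on $\boldsymbol{W}_j^{\textrm{in}}$ that, through the strict monotonicity of $\tanh$, makes the neuron-state images of distinct strategy/period inputs distinguishable and hence the linear output map identifiable; condition \romannumeral2) is the Robbins--Monro step-size condition. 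These are exactly the two alternative hypotheses of the cited theorem, so under either one convergence follows.

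Two points would need care. Because the recursions \eqref{eq:w2} and \eqref{eq:w3} do not interact, the assumption $\lambda'\ll\lambda$ is immaterial to the limits --- it only slows the transient of $\boldsymbol{W}_j^{'\textrm{out}}$ --- and it suffices that each of $\lambda,\lambda'$ individually satisfies whichever of \romannumeral1), \romannumeral2) is being used. The \emph{main obstacle} is the non-stationarity across periods: the target $r_k(t)$ depends on the period index $n$ through the content request distribution $\boldsymbol{q}_{in}$ and the data correlation, so the convergence statement of \cite[Theorem 2]{VROWNchen} must be applied per period, i.e.\ holding $\boldsymbol{q}_{in}$ and the correlation fixed within period $n$, $\boldsymbol{W}_j^{\textrm{out}}$ converges to the within-period utilities; the across-period drift is exactly what the transfer output $\boldsymbol{y}'_j(t)$ is built to capture, so applying the cited theorem on each period and propagating the converged values through \eqref{eq:w3} completes the argument.
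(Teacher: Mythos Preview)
Your proposal is correct and follows essentially the same route as the paper: both arguments reduce the corollary to \cite[Theorem~2]{VROWNchen} by observing that the ESN structure here (learning rates, input weight matrix, recurrent weight matrix) is the same as in that reference, so its convergence proof carries over verbatim. The paper's justification is in fact a single sentence to that effect, whereas you additionally spell out the decoupling of the two output-weight recursions, the boundedness of the targets and neuron states, the spectral-radius bound $\rho(\boldsymbol{W}_j)=w\le 1$, and the per-period application needed to handle the changing content-request distributions---all of which are reasonable fleshing-out of the one-line reduction the paper gives.
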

{\color{black}Since the convergence of the ESN-based algorithm depends on the values of the learning rates, input weight matrix, and recurrent weight matrix, the proof in [7] will still hold for the proposed algorithm which converges to the utility values ${\!\mathbbm{P}_{in}}\!\left( \boldsymbol{a}_{jk}\left(t\right)\right)$ and ${\!\mathbbm{P}_{in}}\!\left( \boldsymbol{a}_{jk}\left(t\right)\right)- {\!\mathbbm{P}_{in-1}}\!\left( \boldsymbol{a}_{jk}\left(t\right)\right)$.}  
Based on Corollary \ref{co:1}, the proposed transfer learning algorithm can adjust the value of input weight matrix of the ESN and the values of learning rates $\lambda$ and $\lambda'$ to guarantee the convergence of the proposed algorithm. }

\begin{table}[!t]\label{tb1}
  \centering
  \caption{
    \vspace*{-0.3em} ESN-based transfer RL Algorithm for resource Allocation}\vspace*{-1em}
    \begin{tabular}{p{5.5in}}
      \hline \vspace*{-0.8em}
     \hspace*{0em}\begin{itemize}\vspace*{-0.1em}
\item[] \hspace*{0em} \textbf{for} each time $t$ \textbf{do}.
\item[] \hspace*{1em}(a) Each SBS $j$ predicts ${\!\mathbbm{P}_{in}}$ based on (\ref{eq:update}).
\item[] \hspace*{0.5em} \textbf{if} $t=1$ 
\item[] \hspace*{2em}(b) Set the policy of the action selection $\boldsymbol{\pi}_{j}\left(1\right)$ uniformly.
\item[] \hspace*{0.5em} \textbf{else}
\item[] \hspace*{2em}(c) Set $\boldsymbol{\pi}_{j}\left(t\right)$ based on the $\epsilon$-greedy mechanism.
\item[] \hspace*{0.5em} \textbf{end if}
\item[] \hspace*{1em}(d) Broadcast the action selection policy index to other SBSs.
\item[] \hspace*{1em}(e) Receive the action selection policy index as ESN input $\boldsymbol{x}_{j}\left(t\right)$.
\item[] \hspace*{1em}(f) Perform an action based on the policy of action selection $\boldsymbol{\pi}_{j}\left(t\right)$.
\item[] \hspace*{1em}(g) Calculate the actual successful transmission probability.
\item[] \hspace*{1em}(h) Update the states of the neurons based on (\ref{eq:state2}).
\item[] \hspace*{1em}(i) Adjust $\boldsymbol{W}_{j}^{\textrm{out}}$ based on (\ref{eq:w2}).

\item[] \hspace*{0.5em} \textbf{if} $n>1$
\item[] \hspace*{2 em}(j) Estimate the value of ${\!\mathbbm{P}_{in+1}}- {\!\mathbbm{P}_{in}}$ based on (\ref{eq:update1}).
\item[] \hspace*{2em}(k) Calculate the actual value of ${\!\mathbbm{P}_{in+1}}- {\!\mathbbm{P}_{in}}$.
\item[] \hspace*{2em}(l) Adjust $\boldsymbol{W}_{j}^{'\textrm{out}}$ based on (\ref{eq:w3}).
\item[] \hspace*{0.5em} \textbf{end if}

\item[] \hspace*{0em} \textbf{end for}
\end{itemize}\vspace*{-0.3cm}\\
   \hline
    \end{tabular}\label{tab:algo}\vspace{-0.4cm}
\end{table}

\section{Simulation Results}
For simulations, a cellular network deployed within a circular area with radius $r = 500$ m is considered. In this network, $K=5$ SBSs and $U=25$ VR users are uniformly distributed. The bandwidth $B$ of each resource block is set to $10 \times 180$ kHz. 
We use typical wireless network parameters such as in \cite{8114362,3gpp.36.814, Mozaffari2016Unmanned}, as listed in Table \ref{ta:simulation}. {\color{black}All of the simulation data related to VR is collected from wired HTC Vive VR devices \cite{htc}. We use 5 VR videos and 5 VR games as the total number of contents that can be provided by the cloud. Each user will request its visible contents according to its head movement. The tracking information of each user is extracted from the sensors of the HTC Vive VR devices.}
 For comparison purposes, we use two baselines: 
\begin{itemize}
 \item {\color{black}The first baseline is the Q-learning algorithm in \cite{30} with data correlation, which we refer to as ``Q-learning with data correlation''. For this Q-learning algorithm, the state is set to the ESN's input ${\boldsymbol{x}_j}$, the actions of Q-learning  are the actions defined in our ESN algorithm, and the reward function $r\left( {\boldsymbol{x}_j},{\boldsymbol{a}_j} \right)$ is the total successful transmission probability in (13). At each iteration, this Q-learning algorithm will select an action based on the $\epsilon$-greedy mechanism and, then, use a Q-table to record the states, actions, and the successful transmission probabilities resulting from the actions that the SBSs have implemented. Finally, each SBS will update its Q-table by the following equation: ${Q_t}\left( {{\boldsymbol{x}_j},{\boldsymbol{a}_j}} \right) = \left( {1 - \zeta } \right){Q_{t - 1}}\left( {\boldsymbol{x}_j},{\boldsymbol{a}_j} \right) + \zeta r\left( {\boldsymbol{x}_j},{\boldsymbol{a}_j} \right)$, where $\zeta$ is the learning rate.}
 
  
  \item {\color{black} The second baseline is the Q-learning algorithm in \cite{30} without data correlation, which we refer to as ``Q-learning without data correlation''. The settings of the Q-learning without data correlation are similar to the Q-learning algorithm with data correlation. However, in the Q-learning without data correlation algorithm, the cloud will directly transmit $360^\circ$ contents to the SBSs and the users will directly transmit their tracking information to their associated SBSs without the consideration of data correlation.  }
  
    \item {\color{black} The third baseline is the ESN-based transfer RL without data correlation, which we refer to as ``ESN-based RL without data correlation''. The setting of this algorithm is similar to the proposed algorithm. However, this algorithm does not consider the data correlation for downlink VR content transmission and uplink tracking information transmission.}

\end{itemize}
All statistical results are averaged over a large number of independent runs.  {\color{black}In simulation figures, total successful transmission probability indicates the total successful transmission probability of all the users that are associated with the SBS.}



\begin{table}\label{ta:simulation}\footnotesize
  \newcommand{\tabincell}[2]{\begin{tabular}{@{}#1@{}}#2\end{tabular}}
\renewcommand\arraystretch{0.8}
 \caption{
    \vspace*{-0.05em}SYSTEM PARAMETERS}\label{ta:simulation}\vspace*{-0.4em}
\centering  
\begin{tabular}{|c|c|c|c|}
\hline
\textbf{Parameters} & \textbf{Values} & \textbf{Parameters} & \textbf{Values}\\
\hline
$N_W$&100& $\lambda$ &  0.3 \\
\hline
$T$&1000  & $N$ & 100 \\
\hline
$\chi _{\sigma_\textrm{LoS}}$ & 5.3&$P_U$ & 20 dBm  \\
\hline
 $G_{360^\circ}$& 50 Mbits & $P_\textrm{B}$ &  30 dBm   \\ 
\hline
 $S$ & 5 & $V$ & 5   \\ 
\hline
$\kappa$&5 &$ V^\textrm{B} $ &10 Gbits/s \\
\hline
$\lambda'$ & 0.03 &$\gamma_D$ & 20 ms\\
\hline
$\upsilon ^2$& -105 dBm&$G_{120^\circ}$& 12 Mbits \\
\hline
  $\alpha$ &2 &$N_{w}$ & 100\\
  \hline
\end{tabular}
 \vspace{-0.3cm}
\end{table}

 \begin{figure}[!t]
  \begin{center}
   \vspace{0cm}
    \includegraphics[width=11cm]{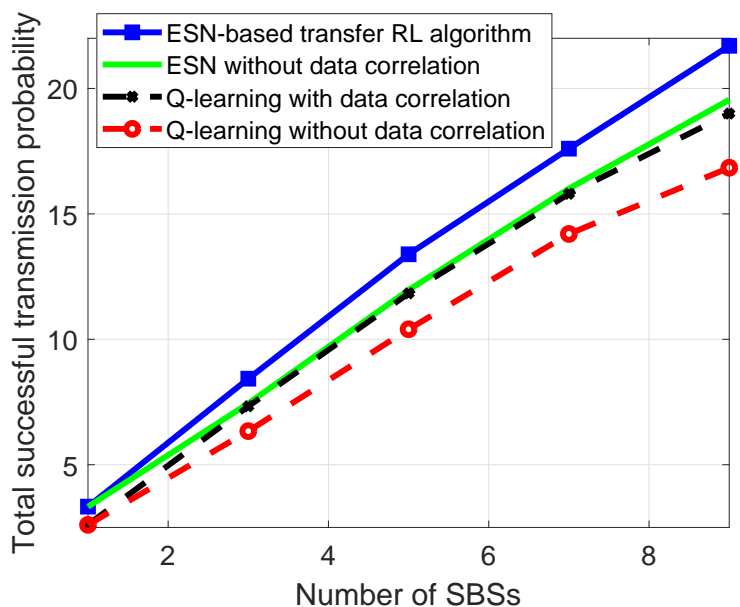}
    \vspace{-0.3cm}
    \caption{\label{figure1} Total successful transmission probability as the number of SBSs varies.}
  \end{center}\vspace{-1cm}
\end{figure}

Fig. \ref{figure1} shows how the total successful transmission probability varies as the number of SBSs changes. Fig. \ref{figure1} shows that, as the number of SBSs increases, the total successful transmission probability of all considered algorithms increases. The reason behind this is that the users have more SBS choices and the number of users located in each SBS's coverage decreases when the number of SBSs increases. Fig. \ref{figure1} also shows that the ESN-based transfer RL algorithm can yield up to 15.8\% and 29.4\% gains in terms of the total successful transmission probability compared to the Q-learning with data correlation and Q-learning without data correlation for a network with 9 SBSs. This is because the ESN-based transfer RL approach can record historical information related to the users' data correlation and content request distribution so as to find the optimal resource block allocation policy.

 \begin{figure}[!t]
  \begin{center}
   \vspace{0cm}
    \includegraphics[width=11cm]{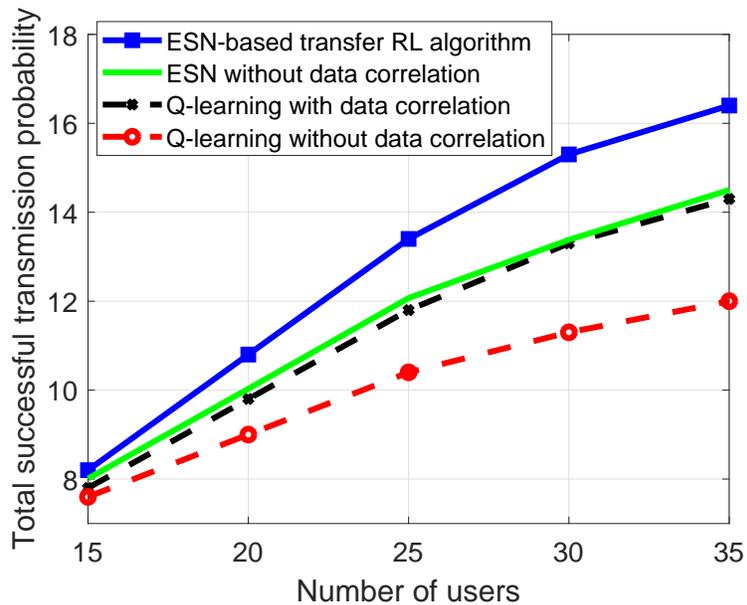}
    \vspace{-0.3cm}
    \caption{\label{figure2} Total successful transmission probability as the number of users varies.}
  \end{center}\vspace{-1cm}
\end{figure}

In Fig. \ref{figure2}, we show how the total successful transmission probability changes as the number of users varies. Fig. \ref{figure2} shows that, as the number of users increases, the total successful transmission probability of all considered algorithms increases. This implies that, as the number of users increases, the SBSs have more choices of users to service. Fig. \ref{figure2} also shows that, as the number of users increases, the ESN-based algorithm can achieve more gain in terms of total successful transmission probability compared to the Q-learning with data correlation. The reason behind this is that the ESN-based transfer RL algorithm uses an approximation method to record historical information while Q-learning uses a Q-table to record the historical information. In consequence, the ESN-based transfer RL algorithm can record more historical information compared to Q-learning and, hence, it can accurately predict the total successful transmission probability. Fig. \ref{figure2} also shows that, as the number of users increases, the gap between Q-learning with data correlation and Q-learning without data correlation increases. The main reason behind this is that, as the number of users increases, the probability that the users request the same content increases and, hence, the data correlation among the users increases.

 \begin{figure}[!t]
  \begin{center}
   \vspace{0cm}
    \includegraphics[width=11cm]{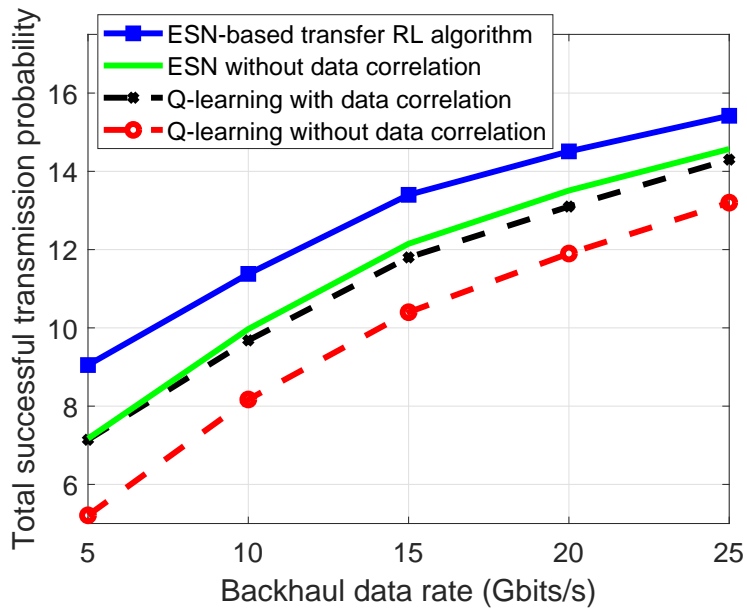}
    \vspace{-0.3cm}
    \caption{\label{figure4} Total successful transmission probability as the backhaul data rate varies.}
  \end{center}\vspace{-1cm}
\end{figure}

Fig. \ref{figure4} shows how the total successful transmission probability changes as the total data rate of the backhaul varies. From Fig. \ref{figure4}, we can see that, as the data rate of the backhaul increases, the total successful transmission probability of all considered algorithms increases. This implies that as the data rate of backhaul increases, the transmission delay over backhaul links decreases. Fig. \ref{figure4} also shows that the gap between the proposed ESN-based transfer RL algorithm and Q-learning with data correlation decreases. This is due to the fact that, as the data rate of the backhaul increases, the data rate of each user for content transmission increases. In consequence, the effect of using data correlation to reduce the data traffic over backhaul links decreases.

 \begin{figure}[!t]
  \begin{center}
   \vspace{0cm}
    \includegraphics[width=11cm]{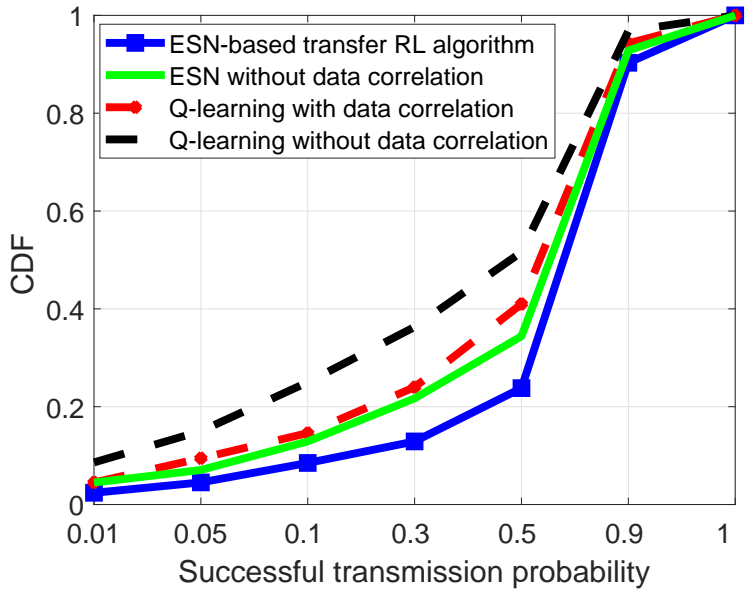}
    \vspace{-0.3cm}
    \caption{\label{figure5} CDFs of the delay resulting from the different algorithms.}
  \end{center}\vspace{-1cm}
\end{figure}

In Fig. \ref{figure5}, we show the cumulative distribution function (CDF) for the total successful transmission probability resulting from all of the considered algorithms. Fig. \ref{figure5} shows that the ESN-based transfer RL algorithm improves the CDF of by to 45\% and 56\% compared, respectively, to Q-learning with data correlation and Q-learning without data correlation at a successful transmission probability of 0.5. These gains stem from the fact that the proposed ESN-based transfer RL algorithm can record more historical information related to the states of the network and users compared to Q-learning. Moreover, the ESN-based transfer RL algorithm can transfer the resource allocation schemes that have been learned in the previous period for the new resource allocation schemes that must be learned in the next period. In consequence, the proposed ESN-based transfer RL algorithm can predict the successful transmission probability more accurately compared to Q-learning and find a better solution for the successful transmission probability maximization.

{\color{black}
 \begin{figure}
  \begin{center}
   \vspace{0cm}
    \includegraphics[width=11cm]{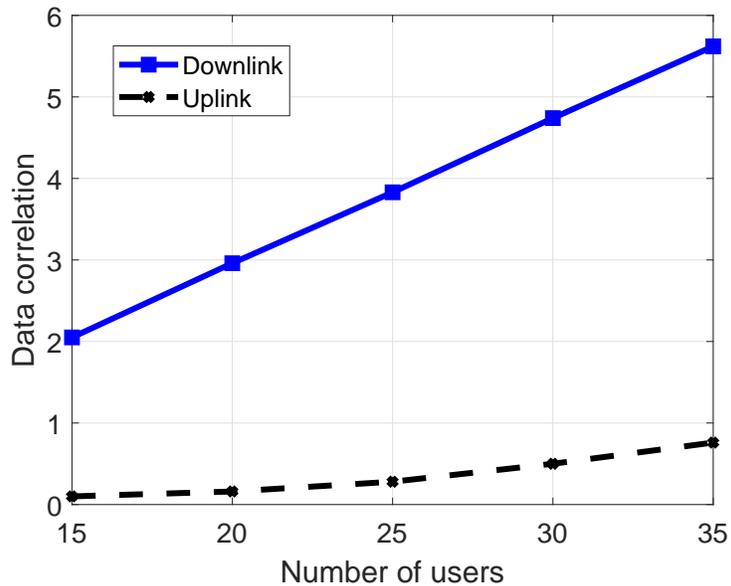}
    \vspace{-0.3cm}
    \color{black}\caption{\label{figure9} Data correlation as the number of users varies.}
  \end{center}\vspace{-0.7cm}
\end{figure}

In Fig. \ref{figure9}, we show how the data correlations of uplink tracking information and downlink VR contents {\color{black} of user $i$} change as the number of VR users varies. {\color{black}In this figure, user $i$ is randomly choosen from the network. The uplink data correlation is  $\sigma _{i}^{\max}=\mathop {\max }\limits_{k\in \mathcal{U}_j, k \ne i} \left( {{\sigma _{ik}}} \right)$, which represents {the maximum covariance for each user $i$.} The downlink data correlation is ${{U_{ja}} - \sum\limits_{n = 2}^{{U_{ja}}} {\sum\limits_{{C_a} \in \mathcal{C}_a^n} {{{\left( { - 1} \right)}^{n - 1}}{C_a}} }    }$, which represents the data correlation among the users that request content $a$. For uplink, we consider the data correlation between only two users while for downlink, we consider the data correlation among multiple VR users.} From this figure, we can see that, as the number of users increases, both the data correlations over uplink and downlink increase. An increase in uplink data correlation is because the distance between two users decreases and the maximum covariance increases as the number of users increases. An increase in downlink data correlation is due to the fact that the number of users that request the same contents increases. Fig. \ref{figure9} also shows that the value of uplink data correlation is below 1 while the downlink data correlation is larger than 1. This is due to the fact that, in uplink, the data correlation is considered between only two users. However, in downlink, the data correlation is considered among multiple users.}

 \begin{figure}
  \begin{center}
   \vspace{0cm}
    \includegraphics[width=11cm]{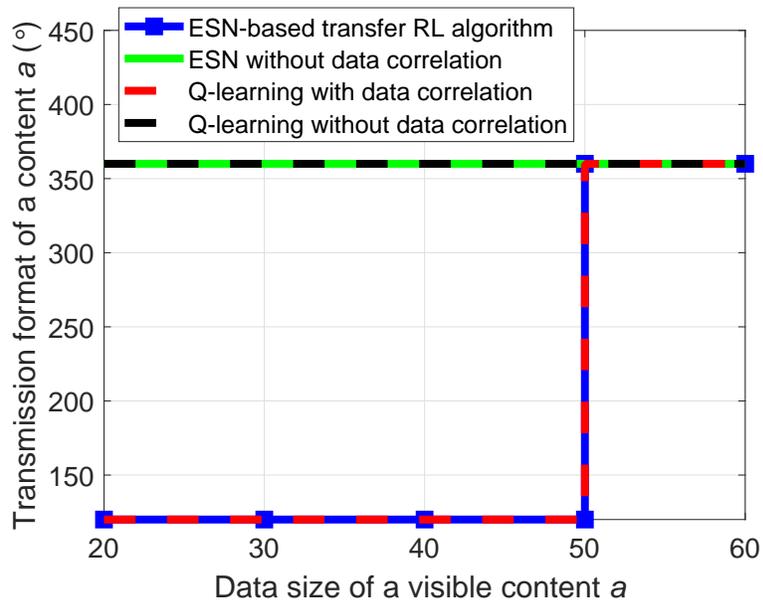}
    \vspace{-0.3cm}
    \caption{\label{figure6} Transmission format of a given content as the data size of the visible contents that the cloud needs to transmit varies.}
  \end{center}\vspace{-0.5cm}
\end{figure}

Fig. \ref{figure6} shows how the transmission format of a given content $a$ changes as the data size of visible contents that the cloud needs to transmit to an SBS varies. Here, the change of the data size of visible contents indicates the change of the data correlation among the users that request content $a$. From Fig. \ref{figure6}, we can see that, as the data size of visible contents does not exceed 50 Mbits, the cloud will transmit visible contents that are extracted from $360^\circ$ content $a$ to the SBS. In contrast, when the data size of visible contents exceeds 50 Mbits, the cloud will transmit $360^\circ$ content $a$ to the SBS. This is because the cloud will always transmit the content that has a smaller data size. Fig. \ref{figure6} also shows that, for the Q-learning algorithm without data correlation, the cloud will always transmit $360^\circ$ contents to the SBS. This is because, for the Q-learning algorithm without data correlation, the cloud will not consider the data correlation among the users.

 \begin{figure}[!t]
  \begin{center}
   \vspace{0cm}
    \includegraphics[width=11cm]{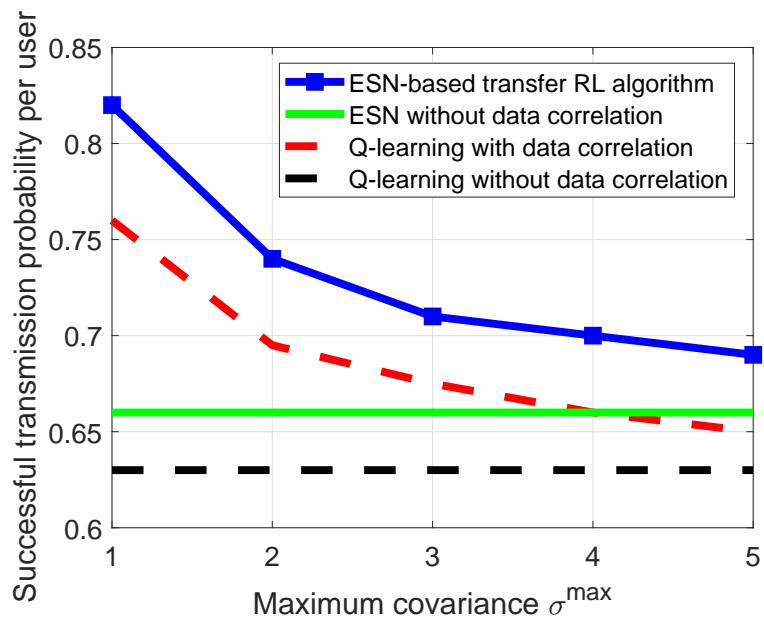}
    \vspace{-0.2cm}
    \caption{\label{figure7} Successful transmission probability per user as $\sigma^{\max}$ varies.}
  \end{center}\vspace{-1cm}
\end{figure}

In Fig. \ref{figure7}, we show how the successful transmission probability per user varies as the uplink maximum covariance $\sigma^{\max}$ changes. Here, an increase in $\sigma^{\max}$ indicates that the maximum uplink data correlation decreases. In contrast, an increase in $\sigma^{\max}$ indicates that the maximum uplink data correlation increases. From Fig. \ref{figure7}, we can see that, as $\sigma^{\max}$ increases (the maximum uplink data correlation decreases), the successful transmission probability per user resulting from all of the considered algorithms decreases. This is due to the fact that, as $\sigma^{\max}$ increases, the data size of the tracking information that the users needs to transmit to the SBS increases. In consequence, the delay of the transmitting tracking information increases. Fig. \ref{figure7} also shows that the gap between the Q-learning with data correlation and Q-learning without data correlation decreases because, as $\sigma^{\max}$ continues to increase, the data correlation of tracking information decreases.       

 \begin{figure}[!t]
  \begin{center}
   \vspace{0cm}
    \includegraphics[width=11cm]{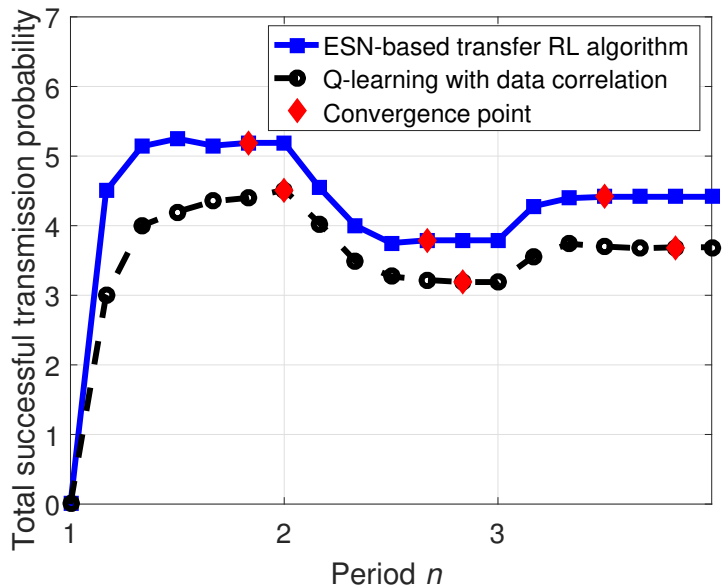}
    \vspace{-0.3cm}
   {\color{black} \caption{\label{convergence} Convergence of learning algorithms.}}
  \end{center}\vspace{-1cm}
\end{figure}

{\color{black} In Fig. \ref{convergence}, we show the number of iterations needed for convergence as period $n$ changes. In this figure, the users' data correlation and content request distribution change with period $n$. {\color{black} The convergence point indicates that each SBS finds its optimal resource allocation vector during each period and the total successful transmission probability of all users associated with this SBS is maximized.} From Fig. \ref{convergence}, we can see that the successful transmission probability per SBS for all considered algorithms increases and, then, converges as time elapses. We can also see that the proposed ESN-based transfer RL algorithm needs $33\%$ less iterations to reach convergence compared to Q-learning with data correlation at period 3. Meanwhile, the proposed algorithm at period 3 uses 33\% less iterations to reach convergence compared to the proposed algorithm at period 1. 
 These gains are due to the fact that the proposed ESN-based transfer RL algorithm can use the already learned successful transmission probability for learning the new successful transmission probability thus increasing the learning speed. }

 \begin{figure}[!t]
  \begin{center}
   \vspace{0cm}
    \includegraphics[width=11cm]{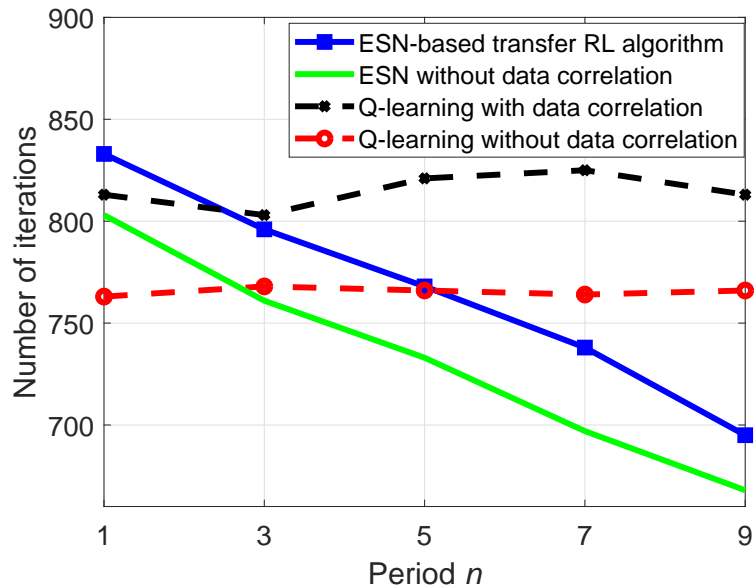}
    \vspace{-0.3cm}
    \caption{\label{figure8} Number of iterations needed for convergence as period varies.}
  \end{center}\vspace{-1cm}
\end{figure}

Fig. \ref{figure8} shows how the number of iterations changes as the period varies. From Fig. \ref{figure8}, we can see that, as the period increases, the number of iterations needed for convergence of the Q-learning with data correlation and Q-learning without data correlation do not change significantly. However, the number of iterations needed for the convergence of the proposed algorithm decreases as the period increases. Fig. \ref{figure8} also shows that the proposed algorithm can achieve up to 9.6\% and 14.3\% gains in terms of the number of iterations needed for convergence compared to the Q-learning with data correlation and Q-learning without data correlation schemes, during period 9. This is due to the fact that, as the period changes, the content request distribution of each user changes. Hence, the data correlation among the users will change and each SBS needs to retrain the Q-learning algorithms so as to find the optimal resource allocation. However, the proposed transfer learning algorithm can directly build the relationship between the actions, states and utility values via transferring the information that was already learned in previous periods. Fig. \ref{figure8} also shows that the number of iterations needed for the convergence of the Q-learning with data correlation is larger than for the Q-learning without data correlation. Meanwhile, the number of iterations needed for the convergence of the Q-learning without data correlation changes slightly as the period changes. This is because the Q-learning without data correlation does not consider the data correlation among the users.       

%

\section{Conclusion}
In this paper, we have studied the problem of resource management in a network of VR users whose data can be correlated. We have formulated this data correlation-aware resource management problem as an optimization problem whose goal is to maximize the VR users' successful transmission probability. To solve this problem, we have developed a neural network reinforcement learning algorithm that uses echo state networks along with transfer learning  to find the most suitable resource block allocations. We have then shown that, by using transfer learning, the proposed algorithm can exploit VR user data correlation to intelligently find the optimal resource allocation strategy as the VR users' content request distribution and data correlation change. Simulation results have evaluated the performance of the proposed approach and shown considerable gains, in terms of total successful transmission probability compared to a classical Q-learning algorithm. {\color{black}Future work can consider additional delay components, such as those related to handover and user association.}

 \section*{Appendix}
 
\subsection{Proof of Theorem \ref{th:2}}\label{Ap:a}   
To simplify the proof, we use $ \boldsymbol{s}, \boldsymbol{v}, D_{t}^\textrm{U}$, and $D_{t}$ to refer to $ \boldsymbol{s}_{ij}, \boldsymbol{v}_{ij}, D_{ijt}^\textrm{U}$, and $D_{ijt}$. For i), the gain that stems from an increase in the allocated uplink resource blocks, $\Delta {\mathbbm{P}_i}$ can be given by:
\begin{equation}\label{eq:DPi1}
\begin{split}
\Delta {\mathbbm{P}_i}&={\mathbbm{P}_i}\left( \boldsymbol{s}, \boldsymbol{v}+\Delta \boldsymbol{v}\right)-{\mathbbm{P}_i}\left( \boldsymbol{s}, \boldsymbol{v} \right)\\
&= \mathop \frac{1}{T}\!\sum\limits_{t = 1}^T  {{\mathbbm{1}_{\left\{ {D_{t}\left(\boldsymbol{s},g_{ja_{it}}\right)+D_{t}^\textrm{U}\left( \boldsymbol{v} +\Delta \boldsymbol{v}, \sigma_{i}^{\max} \right) \le \gamma _D} \right\}}}}-\mathop \frac{1}{T}\!\sum\limits_{t = 1}^T{{\mathbbm{1}_{\left\{ {D_{t}\!\left(\boldsymbol{s},g_{ja_{it}}\right)+ D_{t}^\textrm{U}\left( \boldsymbol{v}, \sigma_{i}^{\max} \right) \le \gamma _D} \right\}}}}.
\end{split}
\end{equation}
From {\color{black}(\ref{eq:DPi1})}, we note that $ {D_{t}^\textrm{U}\left( \boldsymbol{v} +\Delta \boldsymbol{v}, \sigma_{i}^{\max} \right) \le D_{t}^\textrm{U}\left( \boldsymbol{v}, \sigma_i^{\max}\right)}$. This is due to the fact that SBS $j$ allocates more uplink resource blocks $\Delta \boldsymbol{v}$ to user $i$ and, hence, the total delay of user $i$ decreases. 
From {\color{black}(\ref{eq:DPi1})}, we can see that,
\begin{equation}
\Delta {\mathbbm{P}_i}=0,~\textrm{if}~{D_{t}\left(\boldsymbol{s},g_{ja_{it}}\right)+ D_{t}^\textrm{U}\left( \boldsymbol{v},\sigma_i^{\max} \right) \le \gamma _D}. 
\end{equation}
The  is because as ${D_{t}\left(\boldsymbol{s},g_{ja_{it}}\right)+ D_{t}^\textrm{U}\left( \boldsymbol{v},\sigma_i^{\max} \right) \le \gamma _D}$,  ${D_{t}\!\left(\boldsymbol{s},g_{ja_{it}}\right)\!+\!D_{t}^\textrm{U}\!\left( \!\boldsymbol{v} \!+\!\Delta \boldsymbol{v}, \sigma_{i}^{\max} \right) \!\le\! \gamma _D}$. In consequence, as $ {D_{t}\!\left(\boldsymbol{s},g_{ja_{it}}\right)+ D_{t}^\textrm{U}\left( \boldsymbol{v},\sigma_{i}^{\max}\right)\! \le\! \gamma _D}$, $\Delta {\mathbbm{P}_i}=0$.
From (\ref{eq:DPi}), we can also see that,
\begin{equation}
\Delta {\mathbbm{P}_i}=0,~\textrm{if}~{D_{t}\!\left(\boldsymbol{s},g_{ja_{it}}\right)\!+\!D_{t}^\textrm{U}\!\left( \boldsymbol{v} \!+\Delta \boldsymbol{v},\sigma_{i}^{\max} \right) \!> \!\gamma _D}. 
\end{equation}
 This is because as ${D_{t}\!\left(\boldsymbol{s},g_{ja_{it}}\right)\!+\!D_{t}^\textrm{U}\left( \boldsymbol{v} \!+\!\Delta \boldsymbol{v},\sigma_{i}^{\max} \right) \!>\! \gamma _D}$, then $ {D_{t}\!\left(\boldsymbol{s},g_{ja_{it}}\right)\!+\! D_{t}^\textrm{U}\!\left( \boldsymbol{v},\sigma_{i}^{\max}\right)}$. Hence, as ${D_{t}\left(\boldsymbol{s},g_{ja_{it}}\right)+D_{t}^\textrm{U}\left( \boldsymbol{v} +\Delta \boldsymbol{v},\sigma_{i}^{\max} \right) > \gamma _D}$, $\Delta {\mathbbm{P}_i}=0$. Finally, as $D_{t}\left(\boldsymbol{s},g_{ja_{it}}\right)$+$D_{t}^\textrm{U}\left( \boldsymbol{v} +\Delta \boldsymbol{v},\sigma_{i}^{\max} \right) \le \gamma _D$ and $ {D_{t}\!\left(\boldsymbol{s},g_{ja_{it}}\right)+ D_{t}^\textrm{U}\left( \boldsymbol{v},\sigma_{i}^{\max}\right) > \gamma _D}$, $\Delta {\mathbbm{P}_i}$ will be given by:
\begin{equation}
\begin{split}
\Delta {\mathbbm{P}_i}&=\mathop \frac{1}{T}\!\sum\limits_{t = 1}^T  {{\mathbbm{1}_{\left\{ {D_{t}\left(\boldsymbol{s}+\Delta \boldsymbol{s},g_{ja_{it}}\right)+D_{t}^\textrm{U}\left( \boldsymbol{v},\sigma_{i}^{\max} \right) \le \gamma _D} \right\}}}},\\
& \mathop  = \limits^{\left( a \right)}\! \! \mathop \frac{1}{T}\!\sum\limits_{t = 1}^T  \!{{\mathbbm{1}_{\!\!\left\{  \frac{{{K_{i}\left(\sigma_{i}^{\max}\right)}}}{\gamma_D-D_{t}\left(\boldsymbol{s},g_{ja_{it}}\right)}-{c_{ij}\left(\Delta\boldsymbol{v}\right)}  \leqslant  {c_{ij}\left(\boldsymbol{v}_{ij}\right)}  < \frac{{{K_{i}\left(\sigma_{i}^{\max}\right)}}}{\gamma_D-D_{t}\left(\boldsymbol{s},g_{ja_{it}}\right)} \!\right\}}}},
\end{split}
\end{equation}
 where (a) is obtained from the fact that, when ${D_{t}\left(\boldsymbol{s},g_{ja_{it}}\right)+D_{t}^\textrm{U}\left( \boldsymbol{v} +\Delta \boldsymbol{v},\sigma_{i}^{\max} \right) \le \gamma _D}$, ${c_{ij}\left(\boldsymbol{v}\right)} \geqslant \frac{{{K_{i}\left(\sigma_{i}^{\max}\right)}}}{\gamma_D-D_{t}\left(\boldsymbol{s},g_{ja_{it}}\right)}-{c_{ij}\left(\Delta\boldsymbol{v}\right)}$. Similarly, as $ {D_{t}\!\left(\boldsymbol{s},g_{ja_{it}}\right)\!+ \!D_{t}^\textrm{U}\left( \boldsymbol{v},\sigma_{i}^{\max}\right) \!>\! \gamma _D}$ and ${c_{ij}\left(\boldsymbol{v}\right)} < \frac{{{K_{i}\left(\sigma_{i}^{\max}\right)}}}{\gamma_D-D_{t}\left(\boldsymbol{s},g_{ja_{it}}\right)}$. {\color{black}The proof of case ii) is similar to the proof of case i).}

For case iii), we first need to determine the size of visible and $360^\circ$ contents. If $M\left(120^\circ\right)<M\left(360^\circ\right)$, the gain due to the change of content transmission format, $g_{ja}$ can be given by:
\begin{equation}\label{eq:DPi}
\begin{split}
\Delta {\mathbbm{P}_i}&= \mathop \frac{1}{T}\!\sum\limits_{t = 1}^T{{\mathbbm{1}_{\left\{ {D_{t}\!\left(\boldsymbol{s},120^\circ\right)+ D_{t}^\textrm{U}\left( \boldsymbol{v}, \sigma_{i}^{\max} \right) \le \gamma _D} \right\}}}}-\mathop \frac{1}{T}\!\sum\limits_{t = 1}^T  {{\mathbbm{1}_{\left\{ { D_{t}\left(\boldsymbol{s}, 360^\circ\right)+D_{t}^\textrm{U}\left( \boldsymbol{v}, \sigma_{i}^{\max} \right) \le \gamma _D} \right\}}}}.
\end{split}
\end{equation}
Here, since $M\left(120^\circ\right)<M\left(360^\circ\right)$, then $D_{t}\left(\boldsymbol{s}, 360^\circ\right)>D_{t}\!\left(\boldsymbol{s},120^\circ\right)$. In consequence, when ${ D_{t}\left(\boldsymbol{s}, 360^\circ\right)+D_{t}^\textrm{U}\left( \boldsymbol{v} , \sigma_{i}^{\max} \right) \le \gamma _D}$, ${D_{t}\!\left(\boldsymbol{s},120^\circ\right)+ D_{t}^\textrm{U}\left( \boldsymbol{v}, \sigma_{i}^{\max} \right) < \gamma _D}$. Hence,
 \begin{equation}
{{\mathbbm{1}_{\left\{ {D_{t}\left(\boldsymbol{s}, 360^\circ \right)+D_{t}^\textrm{U}\left( \boldsymbol{v} , \sigma_{i}^{\max} \right) \le \gamma _D} \right\}}}}-{{\mathbbm{1}_{\left\{ {D_{t}\!\left(\boldsymbol{s},120^\circ\right)+ D_{t}^\textrm{U}\left( \boldsymbol{v}, \sigma_{i}^{\max} \right) \le \gamma _D} \right\}}}}=0. 
\end{equation} 
Similarly, If ${D_{t}\!\left(\boldsymbol{s},120^\circ\right)+ D_{t}^\textrm{U}\left( \boldsymbol{v}, \sigma_{i}^{\max} \right)  \geqslant  \gamma _D}$, then ${ D_{t}\left(\boldsymbol{s}, 360^\circ\right)+D_{t}^\textrm{U}\left( \boldsymbol{v} , \sigma_{i}^{\max} \right) > \gamma _D}$. Hence, 
 \begin{equation}
{{\mathbbm{1}_{\left\{ {D_{t}\left(\boldsymbol{s}, 360^\circ \right)+D_{t}^\textrm{U}\left( \boldsymbol{v}, \sigma_{i}^{\max} \right) \le \gamma _D} \right\}}}}-{{\mathbbm{1}_{\left\{ {D_{t}\!\left(\boldsymbol{s},120^\circ\right)+ D_{t}^\textrm{U}\left( \boldsymbol{v}, \sigma_{i}^{\max} \right) \le \gamma _D} \right\}}}}=0. 
\end{equation}
Finally, if ${D_{t}\!\left(\boldsymbol{s},120^\circ\right)+ D_{t}^\textrm{U}\left( \boldsymbol{v}, \sigma_{i}^{\max} \right)   \leqslant   \gamma _D}$ and ${ D_{t}\left(\boldsymbol{s}, 360^\circ\right)+D_{t}^\textrm{U}\left( \boldsymbol{v} , \sigma_{i}^{\max} \right) > \gamma _D}$, then $\Delta {\mathbbm{P}_i}$ can be given by:
\begin{equation}\label{eq:DPi3}
\begin{split}
\Delta {\mathbbm{P}_i}&=\mathop \frac{1}{T}\!\sum\limits_{t = 1}^T{{\mathbbm{1}_{\left\{ {D_{t}\!\left(\boldsymbol{s},120^\circ\right)+ D_{t}^\textrm{U}\left( \boldsymbol{v}, \sigma_{i}^{\max} \right) \le \gamma _D} \right\}}}}=\mathop \frac{1}{T}\!\sum\limits_{t = 1}^T{{\mathbbm{1}_{\left\{   {{{M\left( 120^\circ \right)}}}  \leqslant  \left( \gamma_D- \frac{{{ G_{120^\circ} }}}{{{c_{ij}\left(\boldsymbol{s}\right)}}}    -D_{t}^\textrm{U}\left( \boldsymbol{v},\sigma_{i}^{\max} \right) \right){V_{i}^\textrm{B}}\right\}}}},
\end{split}
\end{equation}
(\ref{eq:DPi3}) is hold as ${ D_{t}\left(\boldsymbol{s}, 360^\circ\right)+D_{t}^\textrm{U}\left( \boldsymbol{v} , \sigma_{i}^{\max} \right) > \gamma _D}$. In consequence, $  {{{M\left( 360^\circ \right)}}}$ will be larger than $\left( \gamma_D- \frac{{{ G_{120^\circ} }}}{{{c_{ij}\left(\boldsymbol{s}  \right)}}}    -D_{t}^\textrm{U}\left( \boldsymbol{v},\sigma_{i}^{\max} \right) \right){V_{i}^\textrm{B}}$. We can use the same method to prove the case as $M\left( 120^\circ \right)>M\left( 360^\circ \right)$. When $M\left(120^\circ \right)=M\left(360^\circ \right)$, we can obtain that 
\begin{equation}
{{\mathbbm{1}_{\left\{ {D_{t}\left(\boldsymbol{s}, 360^\circ \right)+D_{t}^\textrm{U}\left( \boldsymbol{v}, \sigma_{i}^{\max} \right) \le \gamma _D} \right\}}}}-{{\mathbbm{1}_{\left\{ {D_{t}\!\left(\boldsymbol{s},120^\circ\right)+ D_{t}^\textrm{U}\left( \boldsymbol{v}, \sigma_{i}^{\max} \right) \le \gamma _D} \right\}}}}=0. 
\end{equation}
Therefore, $\Delta {\mathbbm{P}_i}=0$. This completes the proof.

\subsection{Proof of Theorem \ref{th:1}}\label{Ap:a}  
To prove Theorem \ref{th:1}, we must first calculate the data size  $L_a\left(\mathcal{C}_{a}\right)$ of visible contents extracted from $360^\circ$ content $a$ using an enumeration method. Let $U_{ja}=2$. If users $i$ and $j$ that are associated with SBS $j$ request visible contents $a_i$ and $a_j$, respectively, the set of the data correlation between these two users is ${C}_{aik}$. This means that users $i$ and $k$ have ${C}_{aik}$ portion of visible content that is similar. In consequence, when the cloud transmits visible contents $a_i$ and $a_j$ to SBS $j$, the size of the data that the cloud needs to transmit is given by:
\begin{equation}
L_a\left(\mathcal{C}_{a}\right)=G_{120^\circ}\left( U_{ja} \!-\!{C}_{aik}\right)\!=\!G_{120^\circ}\left( U_{ja} \!-\!{\sum\limits_{{C_a} \in \mathcal{C}_a^2} {{C_a}} } \right),
\end{equation}
 where $\mathcal{C}_{a}=\mathcal{C}_a^2=\left\{  {C}_{aik} \right\}$. Similarly, if users $i$, $j$, and $k$ that are associated with SBS $j$ request visible contents $a_i$, $a_j$, and $a_k$, respectively, ($U_{ja}=3$), the size of the data that the cloud needs to transmit can be given by: 
 \begin{equation}
 L_a\left(\mathcal{C}_{a}\right)=G_{120^\circ}\left( U_{ja} \!-\!{C}_{aik}\!-\!{C}_{aij}-{C}_{ajk}+{C}_{aijk}\right)=G_{120^\circ}\!\left(\! U_{ja} \!-\!\!\!\!{\sum\limits_{{C_a} \in \mathcal{C}_a^2} \!\!{{C_a}} }\!+\!{\sum\limits_{{C_a} \in \mathcal{C}_a^3} \!\!{{C_a}} }\!\!\right), 
 \end{equation}
where $\mathcal{C}_{a}=\mathcal{C}_a^2  \cap \mathcal{C}_a^3$ with $\mathcal{C}_a^2=\left\{ {C}_{aik},{C}_{aij},{C}_{ajk}\right\}$ and $\mathcal{C}_a^3=\left\{ {C}_{aijk}\right\}.$
 We can also see that if $U_{ja}=4$, the cloud needs to transmit is given by:
 \begin{equation}
 L_a\left(\mathcal{C}_{a}\right)=G_{120^\circ}\!\left( U_{ja} -{\sum\limits_{{C_a} \in \mathcal{C}_a^2} \!\!{{C_a}} }\!+\!{\sum\limits_{{C_a} \in \mathcal{C}_a^3} \!\!{{C_a}} }-{\sum\limits_{{C_a} \in \mathcal{C}_a^4} {{C_a}} }\right), 
 \end{equation}
 where $\mathcal{C}_{a}=\mathcal{C}_a^2  \cap \mathcal{C}_a^3 \cap \mathcal{C}_a^4$.
 Hence, we can obtain that $L_a\left(\mathcal{C}_{a}\right)= G_{120^\circ} \left( {{U_{ja}} - \sum\limits_{n = 2}^{{U_{ja}}} {\sum\limits_{{C_a} \in \mathcal{C}_a^n} {{{\left( { - 1} \right)}^{n - 1}}{C_a}} } } \right)$.
  The cloud will select the content transmission format that can minimize the size of the data transmitted over cloud-SBS links. In consequence, if $G_{360^\circ}>L_a\left(\mathcal{C}_{a}\right)$, the cloud will transmit visible contents that are extracted from $360^\circ$ content $a$, $G_{360^\circ} \leqslant L_a\left(\mathcal{C}_{a}\right)$, otherwise. This completes the proof. 

\bibliographystyle{IEEEbib}
\def\baselinestretch{0.9}
\bibliography{references}
\end{document}